\documentclass[journal]{IEEEtran}
\usepackage{amsfonts}
\usepackage{algorithmic}
\usepackage{algorithm}
\usepackage{array}
\usepackage[caption=false,font=normalsize,labelfont=sf,textfont=sf]{subfig}
\usepackage{textcomp}
\usepackage{stfloats}
\usepackage{url}
\usepackage{verbatim}
\usepackage{graphicx}
\usepackage{cite}
\usepackage{booktabs}
\usepackage{xcolor,colortbl}
\usepackage{multicol}
\usepackage{pbox}
\usepackage{color,soul}
\usepackage{amsmath, bm}
\usepackage{amssymb}
\usepackage{amsthm}
\usepackage{mathrsfs}
\usepackage{dutchcal}
\usepackage{array}
\usepackage{graphicx, nicefrac}
\hyphenation{op-tical net-works semi-conduc-tor IEEE-Xplore}
% updated with editorial comments 8/9/2021
\newtheorem{theorem}{Theorem}
\newtheorem{lemma}{Lemma}
\newtheorem{proposition}{Proposition}
\begin{document}

\title{RIS-Assisted D2D Communication in the Presence~of Interference: Outage Performance Analysis and DNN-Based Prediction}

\author{Hamid Amiriara, Farid Ashtiani,~\IEEEmembership{Senior Member,~IEEE,} Mahtab Mirmohseni,~\IEEEmembership{Senior Member,~IEEE,}\\ Masoumeh Nasiri-Kenari,~\IEEEmembership{Senior Member,~IEEE,} and Behrouz Maham,~\IEEEmembership{Senior Member,~IEEE}
 % <-this % stops a space

\thanks{H. Amiriara, F. Ashtiani, M. Mirmohseni, and M. Nasiri-Kenari are with the Department of Electrical Engineering, Sharif University of Technology, Tehran 1458889694, Iran (email: {hamid.amiriara, ashtianimt, mirmohseni, mnasiri}@sharif.edu).}
\thanks{B. Maham is with the Department of Electrical and Computer Engineering, Nazarbayev University, Astana 010000, Kazakhstan (email: behrouz.maham@nu.edu.kz).}
\thanks{This work was supported by the Ministry of Science and Higher Education of the Republic of Kazakhstan, under project No. AP13068587, and by Iran National Science Foundation (INSF) under project No. 4001804.}}

% The paper headers
%\markboth{Journal of \LaTeX\ Class Files,~Vol.~14, No.~8, January~2024}%
%{Amiriara \MakeLowercase{\textit{et al.}}: RIS-Assisted D2D Communication in the Presence~of Interference: Outage Performance Analysis and DNN-Based Prediction}

\maketitle

\begin{abstract}
This paper analyses the performance of reconfigurable intelligent surface (RIS)-assisted device-to-device (D2D) communication systems, focusing on addressing co-channel interference, a prevalent issue due to the frequency reuse of sidelink in the underlay in-band D2D communications. In contrast to previous studies that either neglect interference or consider it only at the user, our research investigates a performance analysis in terms of outage probability (OP) for RIS-assisted D2D communication systems considering the presence of interference at both the user and the RIS. More specifically, we introduce a novel integral-form expression for an exact analysis of OP. Additionally, we present a new accurate approximation expression for OP, using the gamma distributions to approximate the fading of both desired and interference links, thereby yielding a closed-form expression. Nevertheless, both derived expressions, i.e., the exact integral-form and the approximate closed-form, contain special functions, such as Meijer's G-function and the parabolic cylinder function, which complicate real-time OP analysis. To circumvent this, we employ a deep neural network (DNN) for real-time OP prediction, trained with data generated by the exact expression. Moreover, we present a tight upper bound that quantifies the impact of interference on achievable diversity order and coding gain. We validate the derived expressions through Monte Carlo simulations. Our analysis reveals that while interference does not affect the system's diversity order, it significantly degrades the performance by reducing the coding gain. The results further demonstrate that increasing the number of RIS's reflecting elements is an effective strategy to mitigate the adverse effects of the interference on the system performance.
\end{abstract}

\begin{IEEEkeywords}
Reconfigurable intelligent surfaces (RIS), co-channel interference, device-to-device (D2D), outage probability (OP), deep neural networks (DNN).
\end{IEEEkeywords}

\section{Introduction}
\IEEEPARstart{T}{he} reconfigurable intelligent surface (RIS) is emerging as a promising technology in the sixth generation (6G) of wireless communications, aiming to deliver higher data rates, enhanced network coverage, and energy-efficient communication.
Characterized by their low latency and cost-effective deployment, RISs consist of electronically adjustable elements that enable modulation of the radio environment through phase shifter adjustments for enhanced signal propagation. These surfaces can seamlessly integrate into existing infrastructure, such as walls, ceilings, and buildings, to augment wireless network performance, making them a prevalent component in modern network design~\cite{ref1}.

Amidst these advancements, the anticipated rapid increase in connected devices, projected to exceed 207 billion globally by the end of 2024 \cite{ref2}, poses new challenges. This unprecedented growth presents substantial challenges in efficiently utilizing the limited spectrum and power resources amid such massive access.
Direct device-to-device (D2D) communication, a vital component of fifth-generation (5G) networks, presents a viable solution to these challenges. With the increasing need for D2D communications, 3GPP began studying and specifying D2D in Release 12, further evolving it into vehicle-to-anything (V2X) using the so-called \textit{sidelink} \cite{ref_underlay}. This sidelink is a dedicated communication link that allows direct transmission between devices without routing through the network's BS. A key design consideration for D2D and V2X communication, particularly in their underlay in-band modes, is the ability of the sidelink to reuse the same time-frequency band as the cellular uplink/downlink.

The above discussion highlights the importance of addressing co-channel interference in D2D and V2X communication, where frequency reuse is common. While the integration of RIS is designed to enhance signal coverage and quality by redirecting signals to shadowed areas, it simultaneously creates additional paths for interference. Consequently, the challenge of co-channel interference intensifies with the incorporation of RIS. This scenario underlines the critical need to consider not only the direct impact of co-channel interference on users but also the reflected interference stemming from the RIS.

\subsection{Related Works and Motivation}
Several recent studies have investigated the impact of interference on the performance of RIS-assisted communications. However, these studies concentrate solely on the user's experience of interference, neglecting the effects of interference at the RIS \cite{ref3,ref4,ref5,ref6,ref7,ref8,ref9,ref10}.

In \cite{ref3}, the application of RIS in wireless-powered interference-limited communication systems was explored, revealing a high dependence of achievable performance on interference level. In \cite{ref4} and \cite{ref5}, the authors examined an RIS-assisted cellular network, where the destination is subject to co-channel interference. They analyzed the impact of co-channel interference on the system performance in a Rayleigh fading channel.
The study in \cite{ref6} delved into the performance of an RIS-aided mixed FSO/RF system in the presence of co-channel interference, particularly for a large number of reflecting elements, using the non-central chi-square distribution to model the cascaded channels.
Furthermore, \cite{ref7} presented the performance analysis of an RIS-aided relaying system under interference, utilizing the RIS as a transmitter.
The authors in \cite{ref9} analyzed the ergodic capacity and energy efficiency for the downlink signal received by mobile users in the RIS-aided cellular system, considering interference from co-channel base stations (BSs) in cell-edge users.
In \cite{ref8}, the authors derived an analytical upper bound for the ergodic achievable rate of a D2D communication system assisted by RIS over a fading channel, considering interference between cellular users and the D2D receiver.
In \cite{ref10}, the author proposed a deep reinforcement learning (DRL) algorithm to address the joint optimization of power allocation and phase shift matrix. This approach aims to mitigate interference from other D2D transmitters at the target D2D receiver in RIS-assisted D2D communications.
In \cite{New1}, the performance of an RIS-aided mobile destination node in the presence of mobile co-channel interferers was evaluated under Rayleigh fading conditions. Building on \cite{New1}, the author in \cite{New2} extended the analysis by incorporating $\kappa-\mu$ fading distributions and transceiver hardware distortions.
In \cite{New3}, the author derived the outage probability of an RIS-assisted communication system, investigating scenarios with co-channel interference sources around the user.

However, these studies do not fully address the impact of co-channel interference on system performance in the context of RIS-aided D2D communication. Since the RIS cannot suppress interference\footnote{Owing to the lack of precise knowledge about the interference channel in practical scenarios.}, it becomes crucial to analyze the influence of reflected interference on system performance. Consequently, it is essential to consider the impact of interference both at the RIS and at the user.

Outage probability (OP) stands as a key metric for measuring communication performance, and accurate OP prediction is crucial for enhancing reliability. Nevertheless, evaluating the OP performance in RIS-assisted communications proves challenging, particularly when considering interference at both the RIS and the user.
To tackle this issue, prior works on interference-free RIS-assisted communications have often employed various approximation methods to model the cascaded channels; such as the central limit theorem (CLT)-based approximation \cite{ref11,ref12,ref13} and gamma-based approximation \cite{ref14,ref15,ref16}. The non-central Chi-squared distribution approximation method, relying on the CLT, is predominantly accurate when there is a large number of reflecting elements involved. However, this method can result in significant approximation errors in scenarios with a high signal-to-noise ratio (SNR) regime \cite{ref17}. To circumvent these CLT issues, the authors in \cite{ref12}, \cite{ref13} used gamma distribution to approximate the fading of each reflecting path. On the one hand, the gamma distribution encompasses many power distributions as special cases, and on the other hand, it allows for tractability when evaluating the outage. The gamma-based framework appears to offer more accurate results than the CLT approximation \cite{ref34}. However, the gamma-based approximation approach provides neither a bound nor an asymptotic result for the OP \cite{ref_bound}.

Nevertheless, precisely analyzing OP performance in real-time remains a challenge in RIS-assisted D2D wireless communication networks, especially with the presence of interference at both the RIS and user ends. Real-time analysis is crucial in dynamically changing D2D environments, where rapid changes in network conditions require immediate adjustments to maintain optimal performance and ensure quality of service (QoS). An alternative approach is to predict the performance using machine learning (ML) techniques, which is beyond system optimization and design \cite{ref18}, \cite{ref19}. It has been increasingly applied with success in real-time prediction of the performance across various wireless communication networks \cite{ref20,ref21,ref22,ref23,ref24}. Xu et al. employed a back-propagation (BP) model to forecast bit error probability for mobile internet of things communication systems in \cite{ref20}. A BP model for feature fusion was utilized to identify unknown emitters in \cite{ref21}. In \cite{ref22}, the author proposed a deep neural network (DNN)-based OP intelligent prediction algorithm for the mobile communication internet of things. In~\cite{ref23}, an intelligent OP prediction approach using an Elman model was proposed to evaluate communication performance in internet of vehicle networks. Additionally, \cite{ref24} presented an intelligent OP prediction algorithm based on the improved cuckoo search \cite{ref_cuckoo} for vehicle-to-vehicle communication in mobile networks.

\subsection{Our Contributions}
Based on the aforementioned discussion, this paper presents a new mathematical analysis for the OP performance of RIS-assisted communications in more practical scenarios where both the RIS and the sidelink user are impacted by interference. To precisely represent the OP performance of the considered network, integral-form expressions have been derived. Furthermore, we provide closed-form expressions for approximating the OP performance by using the gamma distribution to model the cascaded channels, which is applicable to an arbitrary number of reflecting elements. However, the derived closed-form expressions involve the parabolic cylinder function, posing computational challenges for mobile networks and hindering the real-time analysis of OP performance.
To address this, we introduce a DNN-based approach for predicting the exact OP performance in real-time. Moreover, we establish tight upper bounds for the OP performance, facilitating the assessment of achievable diversity order and coding gain.

The main contributions of this paper are summarized as follows:
\begin{itemize}
\item{We derive an exact OP expression (in integral form) for an RIS-aided D2D wireless system, considering the interference at both the user and the RIS}
\item{We develop a closed-form expression for the system's OP utilizing a tractable gamma distribution approximation.}
\item{We provide a tight upper bound for the exact OP performance, enabling the characterization of the achievable diversity order and the coding gain in the considered system.}
\item{We propose a novel, exact OP prediction algorithm based on DNN. This proposed OP performance prediction approach can evaluate communication performance in real-time.}
\item{Finally, we evaluate the derived exact OP expression and the gamma-based OP approximation using Monte Carlo simulations. Our results reveal that the upper bound aligns closely with the simulation results, confirming the theoretical analysis. Our simulation analysis further indicates that the adverse effects of the interference can be significantly reduced by employing a large RIS. Moreover, the results show that while interference notably reduces the coding gain, it does not affect the system's diversity order. Additionally, the simulations demonstrate that the proposed DNN-based OP prediction method provides highly accurate results with the added benefit of reduced execution time.}
\end{itemize}
\subsection{Organization and Notations}
The rest of the paper is organized as follows. Section~\ref{sec:system_model} introduces the system model of RIS-assisted D2D communication systems. Section~\ref{sec:op_analysis} offers an extensive analysis of OP, encompassing the derivation of exact expressions, gamma-based approximations, and an asymptotic analysis of OP performance. Section~\ref{sec:dnn_approach} details the development and implementation of a DNN-based approach for predicting OP. In Section~\ref{sec:simulation_results}, we present our Monte Carlo simulations and numerical results, showcasing the accuracy of the derived analytical expressions. Finally, Section~\ref{sec:conclusion} concludes the paper.

\textit{Notations:} Italicized letters (e.g., $a$) indicate scalars, bold lowercase letters (e.g., $\mathbf{a}$) represent vectors, bold uppercase letters (e.g., $\mathbf{A}$) denote matrices, and uppercase script letters (e.g., $\mathcal{A}$) signify sets.
For any complex vector $\mathbf{a}$, $\angle \mathbf{a}$ refers to the phase vector, with each element corresponding to the phase of the respective element in $\mathbf{a}$, while $\mathbf{a}(n)$ refers to its $n$-th element.
The symbols $\mathbb{E}[\cdot]$, $\text{Var}[\cdot]$, $f_X (\cdot)$, and $\mathscr{M}_X(\cdot)$ denote the expectation, variance, probability density function (PDF), and moment-generating function (MGF) of a random variable (RV) $X$, respectively.
Exponential and logarithmic functions are expressed as $\exp(\cdot)$ and $\log(\cdot)$.
Furthermore, $x!$, $\gamma(a,x)$, $\Gamma(x)$, $G_{p,q}^{m,n}\left[\left.x\right|_{b_1,\ldots,b_q}^{a_1,\ldots,a_p}\right]$, $D_a(x)$, $\Phi(a,b;x)$, and $_pF_q(a_1,\ldots,a_p; b_1,\ldots,b_q; x)$ signify the factorial of $x$, the incomplete gamma function \cite[eq.~(8.350.2)]{ref25}, the gamma function \cite[eq.~(8.310)]{ref25}, Meijer's G-function\footnote{The Meijer's G-function is defined by the Mellin-Barnes integral as $G_{p,q}^{m,n}\left[x|_{b_1,\ldots,b_q}^{a_1,\ldots,a_p}\right]=\frac{1}{2\pi i}\int_{L}{\frac{\prod_{j=1}^{m}\Gamma\left(b_j-s\right)\prod_{j=1}^{n}\Gamma\left(1-a_j+s\right)}{\prod_{j=m+1}^{q}\Gamma\left(1-b_j+s\right)\prod_{j=n+1}^{p}\Gamma\left(a_j-s\right)}x^s ds}$.} \cite[eq.~(9.301)]{ref25}, parabolic cylinder functions\footnote{Power series for parabolic cylinder function have been presented in \cite{ref26}.} \cite[eq.~(9.240)]{ref25}, the degenerate hypergeometric function \cite[eq.~(9.210)]{ref25}, and the generalized hypergeometric function \cite[eq.~(9.14.1)]{ref25}, respectively.\footnote{These special functions are built-in functions in the most renowned mathematical software packages, including MAPLE, MATHEMATICA, and MATLAB.}
Additionally, $\mathcal{I}_\nu(\cdot)$ and $\mathcal{K}_\nu(\cdot)$ are the modified Bessel functions of the first and second kind of order $\nu$ \cite[eq.~(9.6.3)]{ref25}, respectively, and $\mathcal{L}^{-1}\{\cdot; \cdot\}$ symbolizes inverse Laplace transform operator.

\section{System Model} \label{sec:system_model}
This paper considers a scenario that includes multiple D2D and cellular users within an orthogonal frequency-division multiplexing (OFDM) framework. It is assumed that each subcarrier exclusively supports a single pair of D2D users through resource-reuse, alongside a cellular user.

Without loss of generality, our focus is on a subsystem where a D2D communication pair coexists with a single cellular user on the same subcarrier, leading to potential interference. As illustrated in Fig.~\ref{fig_1}, this subsystem includes a D2D pair: a single-antenna D2D source user, denoted as ``S", and a single-antenna D2D destination user, denoted as ``D". In situations where the line of sight (LoS) link between these users is obstructed, or environmental factors impede direct communication, they communicate through an RIS, denoted as ``R", which is equipped with $N$ passive reflecting elements. Due to sharing the same time-frequency resources by the D2D users and a single-antenna cellular user (interfering user), denoted as ``I", the interference signal from the cellular user is received both at the D2D destination and the RIS.

\begin{figure}[!t]
\centering
\includegraphics[width=3in]{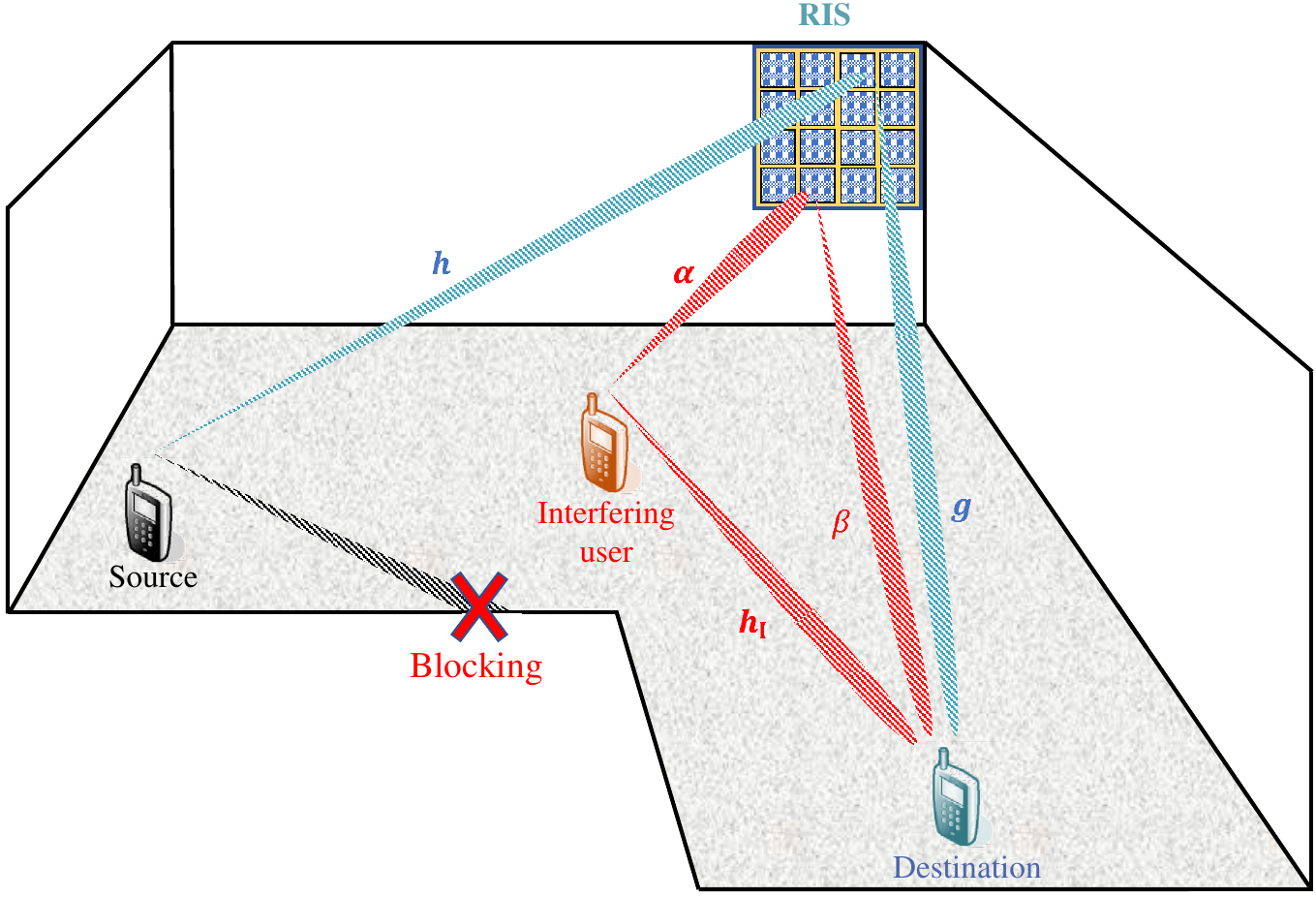}
\caption{RIS-assisted D2D communication system in the presence of interference at both the destination and the RIS.}
\label{fig_1}
\end{figure}

Consequently, the received signal at D is expressed as
\begin{equation}
\begin{aligned}
y = & \underbrace{\sqrt{p_\text{S}} \mathbf{g}^H \mathbf{\Theta} \mathbf{h} x_\text{S}}_{\text{Desired signal}}
+ \underbrace{\sqrt{p_\text{I}} \boldsymbol{\beta}^H \mathbf{\Theta} \boldsymbol{\alpha} x_\text{I}}_{\text{Reflected interference by R}}
+ \underbrace{\sqrt{p_\text{I}} {h}_\text{I} x_\text{I}}_{\text{Interference at the D}} + w,
\end{aligned}
\label{eq1}
\end{equation}
where $x_\text{S}$, $x_\text{I}$ and $p_\text{S}$, $p_\text{I}$ are the unit-norm signals and power transmitted by S and I, respectively. The first term in (\ref{eq1}) represents the desired signal, with $\mathbf{h} \in \mathbb{C}^{N \times 1}$ and $\mathbf{g} \in \mathbb{C}^{N \times 1}$, being the channel vectors from S to R and from R to D, respectively.
The second term in (\ref{eq1}) denotes the interference reflected through the RIS, where $\boldsymbol{\alpha} \in \mathbb{C}^{N \times 1}$ and $\boldsymbol{\beta} \in \mathbb{C}^{N \times 1}$ are the channel vectors from I to R and from R to D, respectively. The third term in (\ref{eq1}) represents the interference from the cellular user, i.e., I, to the D2D destination user, i.e., D, where ${h}_\text{I} \in \mathbb{C}^{1 \times 1}$ is the corresponding channel vector. Finally, $w \sim \mathcal{C}\mathscr{N}(0, N_0)$ denotes the additive white Gaussian noise (AWGN). 
In this paper, we focus on indoor D2D communication channels, predominantly characterized by non-line-of-sight (NLOS) components, using Rayleigh RVs.
Consequently, the channel $h_\text{I}$, as well as each element of the channel vectors, i.e., $\mathbf{h}(n)$, $\mathbf{g}(n)$, $\boldsymbol{\alpha}(n)$, $\boldsymbol{\beta}(n)$ $\forall n \in \mathscr{N} \triangleq \{1, 2, \ldots, N\}$, are assumed to be independent and identically distributed (i.i.d.) complex Gaussian fading with zero-mean and variances $\sigma_\text{I,D}^2$, $\sigma_\text{S,R}^2$, $\sigma_\text{R,D}^2$, $\sigma_\text{I,R}^2$, and $\sigma_\text{R,D}^2$, respectively.\footnote{It should be noted that this study could be extended to Nakagami-$m$ channel scenarios, using \cite[Theorem 1]{ref17}.}

In (\ref{eq1}), $\mathbf{\Theta} = \text{diag}\{\mathbf{r}(1)e^{j\boldsymbol{\theta}(1)}, \mathbf{r}(2)e^{j\boldsymbol{\theta}(2)}, \ldots, \mathbf{r}(N)e^{j\boldsymbol{\theta}(N)}\} \in \mathbb{C}^{N \times N}$ denotes the reflection coefficient matrix generated by the RIS, where $\mathbf{r}(n) \leq 1$ indicates the amplitude attenuation, and $\boldsymbol{\theta}(n) \in [0, 2\pi)$ represents the phase shift of the $n$-th reflecting element in the RIS. We assume that each RIS element reflects the signal without loss, i.e., $\mathbf{r}(n) = 1$, which is a common assumption due to hardware implementation and analytical simplicity. Furthermore, we assume that the RIS controller has full channel state information (CSI) of the D2D link\footnote{By using the existing channel estimation techniques \cite{ref27}, we can obtain the CSI of all channels. In the case of imperfect CSI and channel estimation errors, the proposed formulas can be used as the upper bounds of system performance.}, but there is no channel knowledge of the interference link. Consequently, the phase shift response of each element is locally optimized based on the D2D link as $\boldsymbol{\theta}^*(n) = -\angle \mathbf{g}(n) - \angle \mathbf{h}(n)$.

Thus, the instantaneous signal-to-interference plus noise ratio (SINR) at D can be written as
\begin{equation}
\gamma = \frac{\bar{\gamma}_\text{S} \left( \sum_{n=1}^{N} |\mathbf{g}(n)| |\mathbf{h}(n)| \right)^2}{\bar{\gamma}_\text{I} \left| \sum_{n=1}^{N} | \boldsymbol{\beta}(n) | | \boldsymbol{\alpha}(n) | e^{j\boldsymbol{\theta}'(n)} + h_\text{I} \right|^2 + 1},
\label{eq:sinr}
\end{equation}
where $\bar{\gamma}_\text{S} = \frac{p_\text{S}}{N_0}$ and $\bar{\gamma}_\text{I} = \frac{p_\text{I}}{N_0}$ denote the average SNR and the average interference-to-noise ratio (INR), respectively. We assume that the RIS phase shifts are specifically adjusted to enhance the D2D link, eliminating the necessity for CSI regarding the interference link. Following this approach, $\boldsymbol{\theta}'(n) = \boldsymbol{\theta}^*(n) + \angle \boldsymbol{\beta}(n) + \angle \boldsymbol{\alpha}(n)$ in (\ref{eq:sinr}) can be regarded as a uniformly distributed random variable within $[0, 2\pi)$ \cite{ref28}.

As the system suffers from interference, we assume the network operates in an interference-limited regime. This assumption implies that noise is negligible compared to the interference, allowing us to ignore the ``1" in the denominator of (\ref{eq:sinr}) \cite{ref29, ref30}. Therefore, the signal-to-interference ratio (SIR) can be given by
\begin{equation}
\gamma = \bar{\gamma}\frac{X^2}{Y^2},
\label{eq:sir}
\end{equation}
where $\bar{\gamma} \triangleq \frac{\bar{\gamma}_\text{S}}{\bar{\gamma}_\text{I}}$ denotes the average SIR, while $X$ and $Y$ are two new RVs defined as $X \triangleq \sum_{n=1}^{N} |\mathbf{g}(n)| |\mathbf{h}(n)|$, and $Y \triangleq \left| \sum_{n=1}^{N} | \boldsymbol{\beta}(n) | | \boldsymbol{\alpha}(n) | e^{j\boldsymbol{\theta}'(n)} + h_\text{I}\right|$.

\section{Outage Probability Analysis} \label{sec:op_analysis}
Assuming an SIR threshold of $\gamma_{\text{th}}$, the OP of the system is defined as the probability that the system's SIR falls below the preset threshold value. This can be mathematically written as
\begin{align}
P_{\text{out}} &= \Pr(\gamma < \gamma_{\text{th}})= \Pr\left(\frac{X^2}{Y^2} < \frac{\gamma_{\text{th}}}{\bar{\gamma}}\right)\notag\\
&= \int_{y=0}^{\infty} \int_{x=0}^{y \frac{\gamma_{\text{th}}}{\bar{\gamma}}} f_{X^2}(x) f_{Y^2}(y) \, dx \, dy,
\label{eq:pout}
\end{align}
where $f_{X^2}(x)$ and $f_{Y^2}(x)$ are the PDFs of $X^2$ and $Y^2$, respectively, and (\ref{eq:pout}) follows as $X$ and $Y$ are independent RVs.

Assessing OP requires access to the distributions of $X$ and $Y$, which are difficult to characterize. For that reason, in this section, we derive integral-form expressions for the exact OP as well as closed-form expressions for the approximate OP to analyze the performance of the RIS-assisted communication system. Additionally, an asymptotic analysis is performed, assuming a high average SIR, i.e., $\bar{\gamma}$, to get further insights.

\subsection{Exact Expression}
We obtain the analytical expressions for the exact PDFs of $X$ and $Y$, as detailed in Theorem~\ref{thm1} and Theorem~\ref{thm2}, through which we derive the OP in Lemma~\ref{lm1}.

\begin{proposition}[Exact PDF of $X$]\label{thm1}
The PDF of the sum of double-Rayleigh RVs, $X$, can be expressed in closed-form as
\begin{align}
f_X(x) = \pi^{-\frac{N}{2}} \mathcal{L}^{-1} \left\{ G_{2,2}^{2,2} \left[ \frac{4}{\sigma_\text{S,R}^2 \sigma_\text{R,D}^2 s^2} \Bigg|_{1,1}^{\frac{1}{2},1} \right]^N; x \right\}.
\label{eq:pdf_x}
\end{align}
\end{proposition}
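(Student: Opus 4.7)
The plan is to obtain the density of $X$ as an $N$-fold convolution via Laplace-domain multiplication. Since $X=\sum_{n=1}^{N} Z_n$ with $Z_n\triangleq|\mathbf{g}(n)||\mathbf{h}(n)|$ mutually independent and identically distributed (inheriting independence from the underlying channel coefficients), the MGF of $X$ factorises as $\mathscr{M}_X(s)=[\mathscr{M}_Z(s)]^N$, and consequently $f_X(x)=\mathcal{L}^{-1}\{[\mathscr{M}_Z(s)]^N;\,x\}$. Thus the whole problem reduces to computing $\mathscr{M}_Z(s)$ in a form compatible with the Meijer $G$-function on the right-hand side of the statement.

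First I would derive the density of the double-Rayleigh variable $Z=|g||h|$, where $|g|$ and $|h|$ are Rayleigh with second moments $\sigma_{R,D}^2$ and $\sigma_{S,R}^2$, by the standard product-distribution formula
\[
f_Z(z) \;=\; \int_0^{\infty} f_{|g|}(z/v)\, f_{|h|}(v)\, v^{-1}\, dv,
\]
which evaluates, using \cite[eq.~(3.478)]{ref25}, to a multiple of $z\,\mathcal{K}_0\!\bigl(2z/(\sigma_{S,R}\sigma_{R,D})\bigr)$.

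Next, I would replace the modified Bessel function by its Meijer $G$ representation $\mathcal{K}_0(2\sqrt{t})=\tfrac{1}{2}G_{0,2}^{2,0}[t|_{0,0}]$ and the exponential kernel by $e^{-sz}=G_{0,1}^{1,0}[sz|_0]$. Substituting these into $\mathscr{M}_Z(s)=\int_0^{\infty} e^{-sz} f_Z(z)\,dz$ produces an integral of the product of two Meijer $G$-functions with a quadratic argument in one of them, which is evaluated in closed form by the Mellin--Barnes identity in \cite[eq.~(7.811)]{ref25}. The $\pi^{-1/2}$ prefactor then arises from the Legendre duplication formula used to consolidate the resulting gamma-function factors into the canonical parameter list $\{1/2,1\}$ above and $\{1,1\}$ below, with argument $4/(\sigma_{S,R}^2 \sigma_{R,D}^2 s^2)$. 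Raising this expression to the $N$-th power injects the entire $N$-fold structure into the argument of the $G$-function and produces the factor $\pi^{-N/2}$ outside, yielding exactly the integrand of the stated inverse Laplace transform.

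The main obstacle I anticipate is the middle step: correctly tracking the change of variable $t=z^2$ in the Bessel representation, which shifts both the contour indices and the multiplicative constants inside the Mellin--Barnes integral, and then applying the duplication formula so that the final $G$-function parameters match exactly $\{1/2,1\}$ and $\{1,1\}$ rather than an equivalent but differently normalised list. The inverse Laplace transform itself is left symbolic in the statement, so no further work beyond the MGF computation is required to close the proof.
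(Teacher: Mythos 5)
Your proposal is correct and takes essentially the same route as the paper: decompose $X$ into $N$ i.i.d.\ double-Rayleigh terms, multiply the MGFs, and apply the inverse Laplace transform, with the single-term MGF $\mathscr{M}_{X_n}(s)=\frac{1}{\sqrt{\pi}}G_{2,2}^{2,2}\left[\frac{4}{\sigma_\text{S,R}^2\sigma_\text{R,D}^2 s^2}\Big|_{1,1}^{\frac{1}{2},1}\right]$, which the paper simply cites from \cite[eq.~(3)]{ref33} while you re-derive it from the $z\,\mathcal{K}_0$ product density via a Mellin--Barnes evaluation and the Legendre duplication formula --- a derivation that checks out, including the $\pi^{-1/2}$ prefactor. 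The only quibble is a citation detail: the $\mathcal{K}_0$ form of the double-Rayleigh density follows from \cite[eq.~(3.471.9)]{ref25} rather than eq.~(3.478), which does not affect the argument.
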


\begin{proof}
To derive the statistics of $X = \sum_{n=1}^{N} |\mathbf{g}(n)| |\mathbf{h}(n)|$, we define $X_n = |\mathbf{g}(n)| |\mathbf{h}(n)|$ as double-Rayleigh RVs, which are the product of two independent Rayleigh RVs. According to \cite[eq.~(3)]{ref33}, the MGF of $X_n$ is given by
\begin{equation}
\mathscr{M}_{X_n}(s) = \frac{1}{\sqrt{\pi}} G_{2,2}^{2,2} \left[ \frac{4}{\sigma_\text{S,R}^2 \sigma_\text{R,D}^2 s^2} \Bigg|_{1,1}^{\frac{1}{2},1} \right].
\label{eq:mgf_xn}
\end{equation}
Since $X$ is the sum of $N$ i.i.d. RVs, i.e., $X_n$, the MGF of $X$ can be obtained by the product of the MGFs of $X_n$s. By applying the inverse Laplace transform, we express the PDF of $X$ as in (\ref{eq:pdf_x}), and thus, the proof is complete.
\end{proof}

\begin{theorem}[Exact PDF of $Y$]\label{thm2}
The exact PDF of $Y$ can be presented in explicit expression as,
\begin{align}
&f_Y(y)= \sum_{m=0}^{\infty} 2y \sigma_\text{I,D}^{2m} \sigma_\text{I,R}^{2N} \sigma_\text{R,D}^{2N} \left[ 
y^{-2m} \frac{\Gamma(m-N-1)}{\Gamma(N-m)\Gamma(m+1)} \right.\notag\\
&\times{}_1F_2\left(N; N-m, N-m; \frac{y^2}{\sigma_\text{I,R}^2 \sigma_\text{R,D}^2}\right) \notag\\
&+ \left({\sigma_\text{I,R}^2 \sigma_\text{R,D}^2}\right)^{N-m-1} \frac{\Gamma(N-m-1)}{\Gamma(N)}\notag\\
& \left.\times {}_1F_2\left(m+1; m-N+2, 1; \frac{y^2}{\sigma_\text{I,R}^2 \sigma_\text{R,D}^2}\right) \right].
\label{eq:pdf_y}
\end{align}

\end{theorem}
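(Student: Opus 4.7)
The plan is to exploit the circular symmetry of $Z := \sum_{n=1}^{N} |\boldsymbol{\beta}(n)||\boldsymbol{\alpha}(n)| e^{j\boldsymbol{\theta}'(n)}$---inherited from the i.i.d.\ uniform phases $\boldsymbol{\theta}'(n)$---together with that of $h_\text{I} \sim \mathcal{C}\mathscr{N}(0, \sigma_\text{I,D}^2)$. The sum $W := Z + h_\text{I}$ is then circularly symmetric on $\mathbb{C}$, so $Y = |W|$ can be recovered from the order-zero Hankel transform of its two-dimensional characteristic function, which factorises across the independent terms.

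First I would compute the per-element radial characteristic function of $R_n e^{j\boldsymbol{\theta}'(n)}$, where $R_n = |\boldsymbol{\alpha}(n)||\boldsymbol{\beta}(n)|$ is double-Rayleigh. Conditioning on $|\boldsymbol{\alpha}(n)|$, the standard Rayleigh identity $\mathbb{E}[J_0(\omega r |\boldsymbol{\beta}(n)|)] = \exp(-\omega^2 r^2 \sigma_\text{R,D}^2/4)$, combined with the Laplace transform of the exponentially distributed $|\boldsymbol{\alpha}(n)|^2$, gives $\mathbb{E}[J_0(\omega R_n)] = 4/(4 + \omega^2 \sigma_\text{I,R}^2 \sigma_\text{R,D}^2)$. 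Multiplying this over $n$ and including the Gaussian factor $\mathbb{E}[J_0(\omega |h_\text{I}|)] = e^{-\omega^2 \sigma_\text{I,D}^2/4}$, the Hankel inversion formula yields
\[
f_Y(y) = y \int_0^{\infty} \omega J_0(\omega y)\, e^{-\omega^2 \sigma_\text{I,D}^2/4} \left(\frac{4}{4 + \omega^2 \sigma_\text{I,R}^2 \sigma_\text{R,D}^2}\right)^{\!N} d\omega.
\]

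Next I would expand $e^{-\omega^2 \sigma_\text{I,D}^2/4}$ in its Maclaurin series in $\omega^2$; the $m$-th term supplies the $\sigma_\text{I,D}^{2m}$ prefactor of the statement. The remaining integral $\int_0^{\infty} \omega^{2m+1} J_0(\omega y)(4 + \omega^2 \sigma_\text{I,R}^2 \sigma_\text{R,D}^2)^{-N} d\omega$ I would evaluate by substituting the Mellin--Barnes representation of $J_0$, swapping the order of integration on a suitable vertical strip, and recognising the resulting $\omega$-integral as a Beta function. Closing the Mellin contour picks up two families of residues whose contributions match the two hypergeometric pieces in the statement: the poles of $\Gamma(-s)$ at $s = 0, 1, 2, \ldots$ generate the ${}_1F_2(m+1; m-N+2, 1; y^2/(\sigma_\text{I,R}^2 \sigma_\text{R,D}^2))$ series, whereas the poles of $\Gamma(N - m - s - 1)$ at $s = N - m - 1, N - m, \ldots$ generate the ${}_1F_2(N; N-m, N-m; y^2/(\sigma_\text{I,R}^2 \sigma_\text{R,D}^2))$ series. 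Reassembling the Maclaurin sum in $m$ and collecting the prefactors then produces the stated double expansion.

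The main obstacle will be the Mellin--Barnes residue computation: at integer $N$ the Gamma-function coefficients attached to each hypergeometric piece are individually singular (e.g.\ $\Gamma(N-m-1)$ and $\Gamma(m - N \pm 1)$ have poles), so I would perform the contour analysis with $N$ treated as a complex parameter and obtain the integer case by analytic continuation, which is precisely the sense in which the stated series is to be read. A secondary technical step is justifying the termwise Hankel integration of the Maclaurin series; this follows from uniform absolute convergence of $e^{-\omega^2 \sigma_\text{I,D}^2/4}$ on compacta together with the $\omega^{-(2N-1)}$ tail decay of the remaining factor, which ensures dominated convergence on $[0, \infty)$.
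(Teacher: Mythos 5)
Your proposal follows essentially the same route as the paper's Appendix~A: both identify $Y$ as the envelope of a sum of $N+1$ independent random phasors with uniform phases and invoke the Abdi et al.\ characteristic-function/Hankel-transform identity $f_Y(y)=y\int_0^\infty \rho\, J_0(y\rho)\,\Lambda(\rho)\,d\rho$ with $\Lambda$ factoring over the independent terms; both obtain the per-element double-Rayleigh factor $4/(4+\rho^2\sigma_\text{I,R}^2\sigma_\text{R,D}^2)$ (you by conditioning on one Rayleigh factor, the paper by Meijer-G table identities); both expand the direct-link factor as a power series in $\rho^2$ indexed by $m$ and integrate term by term; and your Mellin--Barnes residue evaluation of $\int_0^\infty \rho^{2m+1}J_0(y\rho)\,\bigl(\rho^2+4/(\sigma_\text{I,R}^2\sigma_\text{R,D}^2)\bigr)^{-N}d\rho$ is a from-scratch derivation of exactly the tabulated integral the paper cites (Gradshteyn--Ryzhik, eq.~(6.565.8)), producing the same two ${}_1F_2$ families from the two pole strings.

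One bookkeeping caveat: your (correct) Gaussian factor $e^{-\omega^2\sigma_\text{I,D}^2/4}$ expands with coefficients $(-1)^m\sigma_\text{I,D}^{2m}/(4^m\, m!)$, so it does not literally ``supply the $\sigma_\text{I,D}^{2m}$ prefactor of the statement'' --- the printed theorem carries a non-alternating $\sigma_\text{I,D}^{2m}$. The origin of the discrepancy is that the paper expands the series of the modified Bessel function $\mathcal{I}_0$ rather than $J_0$ when treating the $h_\text{I}$ term (while using the $J_0$-type identity for the other $N$ factors), which drops the $(-1)^m$; carried to the end, your derivation will differ from (7) by precisely these sign/normalization factors, and that mismatch lies in the paper's bookkeeping, not in your method. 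Your proposal is otherwise the more careful of the two: in particular, reading the degenerate Gamma ratios such as $\Gamma(m-N-1)/\Gamma(N-m)$ at integer $N$ as limits via analytic continuation in $N$ --- which the paper buries under ``some algebraic manipulations'' --- is indeed the only sensible interpretation of the stated series, and your dominated-convergence justification of the termwise Hankel integration addresses a step the paper performs silently.
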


\begin{proof}
See Appendix~\ref{appA}.
\end{proof}

\begin{lemma}[Exact Outage Probability]\label{lm1}
The exact OP of RIS-assisted D2D communication including interference at both the destination and the RIS is given in (\ref{Pout_exact}), shown at the top of the next page.
\end{lemma}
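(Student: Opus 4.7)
The plan is to start from the defining double integral for $P_{\text{out}}$ in (\ref{eq:pout}) and plug in the PDFs already derived in Proposition~\ref{thm1} and Theorem~\ref{thm2}. Because the SIR in (\ref{eq:sir}) is a ratio of squares of two independent nonnegative RVs, I would first rewrite the outage event as $X < Y\sqrt{\gamma_{\text{th}}/\bar{\gamma}}$ and re-express the integral in terms of $f_X$ and $f_Y$ directly, which eliminates the Jacobian arising from the change of variables $X^2,Y^2 \mapsto X,Y$. This gives the single-integral form $P_{\text{out}} = \int_0^\infty F_X\!\bigl(y\sqrt{\gamma_{\text{th}}/\bar{\gamma}}\bigr)\,f_Y(y)\,dy$, which is a more convenient starting point than the double integral because it isolates the CDF of the desired-signal envelope $X$ against the PDF of the interference-plus-interference-reflection envelope $Y$.

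Next, I would obtain $F_X$ by integrating (\ref{eq:pdf_x}) in Proposition~\ref{thm1}. Since integration and the inverse Laplace transform commute, this amounts to multiplying the transformed quantity $\pi^{-N/2}\bigl(G_{2,2}^{2,2}[\,\cdot\,]\bigr)^N$ by $1/s$ inside the $\mathcal{L}^{-1}\{\cdot\}$ operator, yielding a CDF that is itself an inverse Laplace transform of a product of Meijer G-functions. I would then substitute the series representation of $f_Y$ from (\ref{eq:pdf_y}), interchange the summation over $m$ with the $y$-integration (justified by the dominated convergence once the series is truncated in the eventual numerical evaluation), and combine terms so that the outer integral takes the schematic form $\int_0^\infty y^{a_m}\,{}_1F_2(\cdots;\,c y^2)\,F_X\!\bigl(b y\bigr)\,dy$ for each summand.

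The main obstacle will be handling the $y$-integral that contains both the Meijer-G inverse-Laplace representation of $F_X$ and the ${}_1F_2$ hypergeometric factor of $f_Y$. The cleanest route is to leave the $\mathcal{L}^{-1}$ symbol intact, push the $y$-integration inside the Bromwich contour, and reduce the inner $y$-integral to a standard Mellin-Barnes integral by expanding ${}_1F_2$ as a power series and using $\int_0^\infty y^{a}\,F_X(by)\,dy$-type identities. Because the statement of Lemma~\ref{lm1} advertises the result only as an integral-form expression (``shown at the top of the next page''), I do not expect the $y$-integral to collapse to an elementary closed form; rather, the final expression will retain an inverse Laplace transform together with an infinite series over $m$, with all dependence on the system parameters $N$, $\sigma_{\text{S,R}}^2$, $\sigma_{\text{R,D}}^2$, $\sigma_{\text{I,R}}^2$, $\sigma_{\text{I,D}}^2$, $\gamma_{\text{th}}/\bar{\gamma}$ explicitly exposed through the Gamma-function arguments.

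Finally, I would sanity-check the derived expression by two limiting cases. First, setting $\sigma_{\text{I,D}}^2 = \sigma_{\text{I,R}}^2 = 0$ (i.e., no interferer) should reduce the PDF of $Y$ to a degenerate mass at zero and recover the noise-limited OP of an interference-free RIS link, which matches existing results in the literature. Second, as $\bar{\gamma} \to 0$ the inner CDF tends to unity and $P_{\text{out}} \to 1$ as expected. Numerical validation against Monte Carlo simulation of (\ref{eq:sir})--(\ref{eq:pout}) would then confirm the derivation.
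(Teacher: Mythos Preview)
Your approach is correct and essentially equivalent to the paper's, but you are overthinking the task. The paper's proof is extremely short: it simply applies the change-of-variables $f_{X^2}(x)=\tfrac{1}{2\sqrt{x}}f_X(\sqrt{x})$ and $f_{Y^2}(y)=\tfrac{1}{2\sqrt{y}}f_Y(\sqrt{y})$ to the PDFs already established in Proposition~\ref{thm1} and Theorem~\ref{thm2}, substitutes them into the double integral~(\ref{eq:pout}), and declares the resulting expression~(\ref{Pout_exact}). No further simplification is performed; the lemma's ``exact'' expression is literally the raw double integral with those two PDFs plugged in, left to numerical quadrature.

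Your rewriting $P_{\text{out}}=\int_0^\infty F_X\bigl(y\sqrt{\gamma_{\text{th}}/\bar\gamma}\bigr)f_Y(y)\,dy$ is of course the same object (and arguably cleaner, since it avoids the Jacobian bookkeeping), and your observation that $F_X$ can be obtained by inserting a $1/s$ factor inside the inverse Laplace transform is a nice touch the paper does not exploit. However, the subsequent machinery you propose---pushing the $y$-integral through the Bromwich contour, expanding ${}_1F_2$ term-by-term, and invoking Mellin--Barnes identities---goes well beyond what Lemma~\ref{lm1} actually claims. Those steps would be relevant if the goal were a single-integral or series representation amenable to asymptotics, but the lemma stops at the unsimplified double integral. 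Your sanity checks are sound and would be welcome additions, but for the proof as stated, the substitution step alone suffices.
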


\begin{proof}
To characterize the exact OP performance, we begin by deriving the PDFs of $X^2$ and $Y^2$, which are the squares of the RVs $X$ and $Y$, respectively. This derivation is achieved by applying the transformation method to these RVs, utilizing the PDFs of $X$ from (\ref{eq:pdf_x}) and $Y$ from (\ref{eq:pdf_y}). Subsequently, the outcome of this derivation is incorporated into (\ref{eq:pout}), yielding an integral-form expression for OP presented in (\ref{Pout_exact}).\end{proof}
Note that the integrals in (\ref{Pout_exact}) lack readily available closed-form solutions due to the presence of the Meijer G-function and the generalized hypergeometric function, necessitating their evaluation through numerical integration methods. This can be accomplished using well-known mathematical software packages such as MAPLE, MATHEMATICA, and MATLAB.

\begin{figure*}[!t]
\setcounter{equation}{7} % Reset equation counter
\begin{align}\label{Pout_exact}
P_{\text{out}}^{\text{exact}} &= \int_{y=0}^{\infty} \sum_{m=0}^{\infty} 2y \sigma_\text{I,D}^{2m} \sigma_\text{I,R}^{2N} \sigma_\text{R,D}^{2N} \left[
y^{-2m} \frac{\Gamma(m-N-1)}{\Gamma(N-m)\Gamma(m+1)} \times \, {}_1F_2\left(N; N-m, N-m; \frac{y^2}{\sigma_\text{I,R}^2 \sigma_\text{R,D}^2}\right) \right. \notag\\
&\,\,\,\,\,\,\,\,\,\,\,\,\,\,\,\,\,\,\,\,\,\,\,\,\,\,\,\,\,\,\,\,\,\,\,\,\,\,\,\,\,\,\,\,+ \left.\left({\sigma_\text{I,R}^2 \sigma_\text{R,D}^2}\right)^{N-m-1} \frac{\Gamma(N-m-1)}{\Gamma(N)}\times \,_1F_2\left(m+1; m-N+2, 1; \frac{y^2}{\sigma_\text{I,R}^2 \sigma_\text{R,D}^2}\right)\right] \notag\\
&\times \int_{x=0}^{y \frac{\gamma_{\text{th}}}{\bar{\gamma}}} \frac{\pi^{N/2}}{2x^{1/2}} L^{-1} \left\{ G_{2,2}^{2,2} \left[\frac{\sigma_\text{S,R}^2 \sigma_\text{R,D}^2}{4 s^2} \bigg|_{0,0.5}^{0,0} \right]^N; \sqrt{x} \right\} \, dx \, dy.
\end{align}
\hrulefill
\end{figure*}
\setcounter{equation}{8} % Reset equation counter

\subsection{Approximate Analysis}
As observed in the integral-form expression in (\ref{Pout_exact}), utilizing the exact PDF of $X$ and $Y$, in deriving the OP, leads to computationally complex and intractable analysis. The regular gamma distribution is frequently employed in the existing literature to approximate complex distributions due to its two tunable fading parameters: the shape parameter $k$, and the scale parameter $\theta$. The mean and variance of the gamma distribution are $k\theta$ and $k\theta^2$, respectively. In this subsection, we introduce a sufficiently accurate closed-form approximation for the OP by utilizing the gamma distribution approximation of the PDFs of $X$ and $Y^2$, as established in the following theorems. Then, by applying~(\ref{eq:pout}), we derive closed-form expressions for the OP, as detailed in Lemma~\ref{lm2}.

\begin{theorem}[Approximate PDF of X]\label{thm3}
The PDF of the sum of double-Rayleigh RVs, i.e., $X$, can be approximately expressed using a gamma distribution as
\begin{equation}
f_X(x) \approx \frac{1}{\Gamma(k_X) \theta_X^{k_X}} x^{k_X - 1} e^{-\frac{x}{\theta_X}},
\label{eq:pdf_x_approx}
\end{equation}
with the fading parameters $k_X = \frac{\mathbb{E}[X]^2}{\text{Var}[X]}$, and $\theta_X = \frac{\text{Var}[X]}{\mathbb{E}[X]}$, where $\mathbb{E}[X] = N\frac{\pi}{4} \sigma_\text{S,R} \sigma_\text{R,D}$ and $\text{Var}[X] = N\left(1-\frac{\pi^2}{16} \right)\sigma_\text{S,R}^2 \sigma_\text{R,D}^2$.
\end{theorem}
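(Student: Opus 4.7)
The plan is a moment-matching argument. Because the gamma family is two-parameter with $\mathbb{E}=k\theta$ and $\mathrm{Var}=k\theta^{2}$, fixing the gamma parameters by
\[
k_X=\frac{\mathbb{E}[X]^2}{\mathrm{Var}[X]},\qquad \theta_X=\frac{\mathrm{Var}[X]}{\mathbb{E}[X]},
\]
automatically produces a density whose first two moments coincide with those of $X$; this is the standard moment-matching principle used for cascaded RIS channels in the cited works \cite{ref12,ref13,ref14,ref15,ref16,ref34}. The proof therefore reduces to computing $\mathbb{E}[X]$ and $\mathrm{Var}[X]$ in closed form and then invoking this principle.

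First I would compute the moments of a single summand $X_n=|\mathbf{g}(n)|\,|\mathbf{h}(n)|$. Under the paper's convention, $|\mathbf{g}(n)|$ and $|\mathbf{h}(n)|$ are independent Rayleigh RVs arising from zero-mean complex Gaussians of variances $\sigma_{\text{R,D}}^{2}$ and $\sigma_{\text{S,R}}^{2}$ respectively, so the standard Rayleigh moments give $\mathbb{E}[|\mathbf{g}(n)|]=\tfrac{\sqrt{\pi}}{2}\sigma_{\text{R,D}}$, $\mathbb{E}[|\mathbf{g}(n)|^{2}]=\sigma_{\text{R,D}}^{2}$, and analogously for $|\mathbf{h}(n)|$. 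Independence of $\mathbf{g}(n)$ and $\mathbf{h}(n)$ factorises these products to yield
\[
\mathbb{E}[X_n]=\tfrac{\pi}{4}\sigma_{\text{S,R}}\sigma_{\text{R,D}},\qquad \mathbb{E}[X_n^{2}]=\sigma_{\text{S,R}}^{2}\sigma_{\text{R,D}}^{2},
\]
hence $\mathrm{Var}[X_n]=\bigl(1-\tfrac{\pi^{2}}{16}\bigr)\sigma_{\text{S,R}}^{2}\sigma_{\text{R,D}}^{2}$. Next, since $X=\sum_{n=1}^{N}X_n$ with the $X_n$ i.i.d., linearity of expectation and additivity of variance for independent RVs give the claimed $\mathbb{E}[X]=N\tfrac{\pi}{4}\sigma_{\text{S,R}}\sigma_{\text{R,D}}$ and $\mathrm{Var}[X]=N\bigl(1-\tfrac{\pi^{2}}{16}\bigr)\sigma_{\text{S,R}}^{2}\sigma_{\text{R,D}}^{2}$. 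Substituting these into the gamma parameterisation above and plugging $(k_X,\theta_X)$ into the generic gamma density yields (\ref{eq:pdf_x_approx}).

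The main obstacle is not the moment computation, which is routine, but the justification that a gamma density is a genuinely accurate surrogate for the true law of $X$, whose exact MGF from (\ref{eq:mgf_xn}) involves a Meijer G-function and has no elementary closed form. Moment matching alone only guarantees agreement in the first two cumulants, so I would defend the gamma choice on two grounds: (i) the gamma family respects the non-negativity and the right-skewed tail of a sum of products of Rayleigh RVs, which is precisely why it outperforms the CLT/Gaussian approximation in the high-SNR regime as emphasised in the discussion around \cite{ref17,ref34}; and (ii) by an Edgeworth-type argument, as $N$ grows the higher cumulants of $X$ scale only as $N$ while $\mathbb{E}[X]^{2}$ scales as $N^{2}$, so the standardised distribution concentrates and the gamma match becomes asymptotically exact. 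A numerical Kolmogorov--Smirnov cross-check, deferred to the simulation section, would corroborate this claim and close the argument.
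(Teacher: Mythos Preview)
Your proposal is correct and is essentially the same moment-matching argument as the paper's proof: both compute $\mathbb{E}[X_n]=\tfrac{\pi}{4}\sigma_{\text{S,R}}\sigma_{\text{R,D}}$ and $\mathrm{Var}[X_n]=(1-\tfrac{\pi^{2}}{16})\sigma_{\text{S,R}}^{2}\sigma_{\text{R,D}}^{2}$ and then identify the gamma parameters from the first two moments. The only structural difference is the order of operations: the paper first approximates each summand $X_n$ by a gamma $(k_{X_n},\theta_{X_n})$ via \cite[Lemma~1]{ref34} and then invokes the exact closure of i.i.d.\ gamma sums with common scale \cite{ref35} to obtain $k_X=Nk_{X_n}$, $\theta_X=\theta_{X_n}$, whereas you compute the exact moments of the sum $X$ first and moment-match once at the end. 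Both routes yield identical $(k_X,\theta_X)$, so the distinction is cosmetic; your additional Edgeworth/KS justification goes beyond what the paper argues in the proof itself, which simply cites \cite{ref34,ref35} and defers validation to the numerical section.
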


\begin{proof}
Given that $|\mathbf{g}(n)|$ and $|\mathbf{h}(n)|$ are Rayleigh RVs, the distribution of $X_n = |\mathbf{g}(n)||\mathbf{h}(n)|$ can be approximated as a gamma distribution with the fading parameters $k_{X_n} = \frac{\mathbb{E}[X_n]^2}{\text{Var}[X_n]}$, and $ \theta_{X_n} = \frac{\text{Var}[X_n]}{\mathbb{E}[X_n]}$, where $\mathbb{E}[X_n] = \frac{\pi}{4} \sigma_\text{S,R} \sigma_\text{R,D}$ and $ \text{Var}[X_n] = \left(1 - \frac{\pi^2}{16}\right) \sigma_\text{S,R}^2 \sigma_\text{R,D}^2$ \cite[Lemma~1]{ref34}. According to~\cite{ref35}, as $X = \sum_{n=1}^{N} X_n$ is a sum of $N$ gamma RVs, it follows a new gamma distribution with fading parameters $k_X = Nk_{X_n}$ and $\theta_X = \theta_{X_n}$, leading to (\ref{eq:pdf_x_approx}).
\end{proof}

\begin{theorem}[Approximate PDF of $Y^2$]\label{thm4}
The PDF of $Y' \triangleq Y^2$ can be approximately presented by a gamma distribution as
\begin{equation}
f_{Y'}(y) \approx \frac{1}{\Gamma(k_Y) \theta_Y^{k_Y}} y^{k_Y - 1} e^{-\frac{y}{\theta_Y}}.
\label{eq:pdf_y2_approx}
\end{equation}
The fading parameters are $k_Y = \frac{\mathbb{E}[Y']^2}{\text{Var}[Y']}$, and $\theta_Y = \frac{\text{Var}[Y']}{\mathbb{E}[Y']}$, where the $\mathbb{E}[Y']$ and $\text{Var}[Y'] = \mathbb{E}[Y']^2 - \mathbb{E}[Y'^2]$ can be calculated using
\begin{equation}
\mathbb{E}[Y'] = \sigma_\text{I,D}^2 + N \sigma_\text{I,R}^2 \sigma_\text{R,D}^2,
\label{eq:EY_prime}
\end{equation}
and
\begin{align}
\mathbb{E}[Y'^2] &= 4N \sigma_\text{I,R}^4 \sigma_\text{R,D}^4 + 2N(N-1) \sigma_\text{I,R}^2 \sigma_\text{R,D}^2 \notag\\
&+ 2\sigma_\text{I,D}^2 + 4N \sigma_\text{I,R}^2 \sigma_\text{R,D}^2 \sigma_\text{I,D}^2.
\label{eq:EY_prime2}
\end{align}
\end{theorem}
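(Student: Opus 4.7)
The plan is to mirror the strategy used in Theorem~\ref{thm3}. Since $Y'=Y^2$ is a non-negative random variable, I would approximate its distribution by a two-parameter gamma density and fix the shape and scale by moment matching, which gives at once $k_Y = \mathbb{E}[Y']^2/\text{Var}[Y']$ and $\theta_Y = \text{Var}[Y']/\mathbb{E}[Y']$. The real work of the theorem is therefore the explicit evaluation of the first two moments of $Y'$ in~(\ref{eq:EY_prime}) and~(\ref{eq:EY_prime2}), which I would carry out by direct expansion of $|S+h_\text{I}|^{2k}$ for $k=1,2$, where $S\triangleq\sum_{n=1}^{N}|\boldsymbol{\beta}(n)||\boldsymbol{\alpha}(n)|e^{j\boldsymbol{\theta}'(n)}$.

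Throughout these calculations I would lean on three orthogonality properties: (i) the phases $\boldsymbol{\theta}'(n)$ are i.i.d.\ uniform on $[0,2\pi)$, so $\mathbb{E}[e^{j(\boldsymbol{\theta}'(n)-\boldsymbol{\theta}'(m))}] = \delta_{nm}$ and, more generally, any product of complex exponentials whose phase indices do not pair up integrates to zero; (ii) $h_\text{I}$ is circularly symmetric complex Gaussian and independent of $S$, so $\mathbb{E}[h_\text{I}^k]=0$ for all $k\geq 1$ and $\mathbb{E}[|h_\text{I}|^{2k}] = k!\,\sigma_\text{I,D}^{2k}$; (iii) the magnitudes $|\boldsymbol{\alpha}(n)|,|\boldsymbol{\beta}(n)|$ are Rayleigh with $\mathbb{E}[|\boldsymbol{\alpha}(n)|^{2k}] = k!\,\sigma_\text{I,R}^{2k}$ and similarly for $|\boldsymbol{\beta}(n)|$. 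Expanding $|S+h_\text{I}|^2 = |S|^2 + |h_\text{I}|^2 + 2\Re\{Sh_\text{I}^*\}$, the cross term vanishes by $\mathbb{E}[h_\text{I}]=0$, and property~(i) collapses $\mathbb{E}[|S|^2]$ to its diagonal and yields $N\sigma_\text{I,R}^2\sigma_\text{R,D}^2$, immediately giving~(\ref{eq:EY_prime}).

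For $\mathbb{E}[Y'^2] = \mathbb{E}[|S+h_\text{I}|^4]$, squaring the same expansion leaves, after using $\mathbb{E}[S]=0$ and $\mathbb{E}[h_\text{I}^2]=0$ to kill every cross term containing an odd power of either $S$ or $h_\text{I}$, only $\mathbb{E}[|S|^4] + \mathbb{E}[|h_\text{I}|^4] + 4\mathbb{E}[|S|^2]\mathbb{E}[|h_\text{I}|^2]$. The main obstacle is $\mathbb{E}[|S|^4]$: the quadruple sum $\sum_{n,m,k,l}$ carries the phase factor $e^{j(\boldsymbol{\theta}'(n)-\boldsymbol{\theta}'(m)+\boldsymbol{\theta}'(k)-\boldsymbol{\theta}'(l))}$ whose expectation is non-zero only under the pairings $\{n=m,k=l\}$ and $\{n=l,k=m\}$, which overlap on the diagonal $n=m=k=l$ that must be subtracted to avoid double counting. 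Combining the resulting off-diagonal $N(N-1)$ contributions with the diagonal $N$ contribution via the Rayleigh fourth-moment identities, and pairing with $\mathbb{E}[|h_\text{I}|^4] = 2\sigma_\text{I,D}^4$ and the already-computed $\mathbb{E}[|S|^2]$, yields the closed form in~(\ref{eq:EY_prime2}). The combinatorial bookkeeping of this quadruple sum, together with correctly tallying the Rayleigh fourth-moment factors, is the step I expect to be the most error-prone.
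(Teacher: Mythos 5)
Your proposal is correct and follows essentially the same route as the paper's proof: fix the gamma parameters by moment matching, then evaluate $\mathbb{E}[Y']$ and $\mathbb{E}[Y'^2]$ by expanding the double and quadruple sums and invoking the phase-orthogonality conditions (nonzero only for $i=j$, respectively $i=j,\,p=q$ or $i=q,\,p=j$), the circular symmetry and independence of $h_\text{I}$, and the Rayleigh moment identities, with the only cosmetic difference being that the paper additionally motivates the gamma family via the CLT on the $(N{+}1)$-term sum. In fact your careful diagonal/off-diagonal bookkeeping yields the dimensionally consistent result $\mathbb{E}[Y'^2]=4N\sigma_\text{I,R}^4\sigma_\text{R,D}^4+2N(N-1)\sigma_\text{I,R}^4\sigma_\text{R,D}^4+2\sigma_\text{I,D}^4+4N\sigma_\text{I,R}^2\sigma_\text{R,D}^2\sigma_\text{I,D}^2$, which shows that the missing fourth powers in the second and third terms of the printed~(\ref{eq:EY_prime2}), and the reversed sign in the statement's $\text{Var}[Y']=\mathbb{E}[Y']^2-\mathbb{E}[Y'^2]$, are typographical slips in the paper rather than gaps in your argument.
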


\begin{proof}
Since $Y'$ is the square of the absolute value of the sum of $N+1$ independent RVs, i.e., $\sum_{n=1}^{N} | \boldsymbol{\beta}(n) | | \boldsymbol{\alpha}(n) | e^{j\boldsymbol{\theta}'(n)} + h_\text{I}$, we apply the central limit theorem due to the large number of terms. This sum is thus approximated as Gaussian, allowing us to approximate the PDF of $Y'$, a squared Gaussian variable, by a gamma distribution \cite{ref_gamma}. Considering the independence of the I-R-D link from the I-D link and $\mathbb{E}[e^{j\varphi}] = 0$, where $\varphi$ is uniformly distributed over $[0,2\pi)$, it can be verified that
\begin{align}
\mathbb{E}[Y'] &= \mathbb{E}\left[\sum_{i=1}^{N} \sum_{j=1}^{N} | \boldsymbol{\beta}(i) | | \boldsymbol{\alpha}(i) | | \boldsymbol{\beta}(j) | | \boldsymbol{\alpha}(j) | e^{j\left(\boldsymbol{\theta}'(i) - \boldsymbol{\theta}'(j)\right)}\right]\notag \\
&+\mathbb{E}[|h_\text{I}|^2].
\label{eq:EY_prime3}
\end{align}

Using (\ref{eq:EY_prime3}) and considering $\mathbb{E}[e^{j\left(\boldsymbol{\theta}'(i) - \boldsymbol{\theta}'(j)\right)}] = 1$ if $i = j$, and $\mathbb{E}[e^{j(\boldsymbol{\theta}'(i) - \boldsymbol{\theta}'(j)}] = 0$ otherwise, we obtain the expectation of $Y'$ as in (\ref{eq:EY_prime}). Subsequently,
\begin{align}
&\mathbb{E}[Y'^2]= \mathbb{E}\left[ \left| \sum_{n=1}^{N} | \boldsymbol{\beta}(n) | | \boldsymbol{\alpha}(n) | e^{j\theta_n'} + h_\text{I} \right|^4 \right] \nonumber \notag\\
&= 4\mathbb{E}\left[\sum_{i=1}^{N} \sum_{j=1}^{N} | \boldsymbol{\beta}(i) | | \boldsymbol{\alpha}(i) | | \boldsymbol{\beta}(j) | | \boldsymbol{\alpha}(j) | e^{j\left(\boldsymbol{\theta}'(i) - \boldsymbol{\theta}'(j)\right)}\right] \mathbb{E}[|h_\text{I}|^2] \notag\\
&+ \mathbb{E}\left[ \left| \sum_{i=1}^{N} \sum_{j=1}^{N} \sum_{p=1}^{N} \sum_{q=1}^{N} | \boldsymbol{\beta}(i) | | \boldsymbol{\alpha}(i) | | \beta(j) | | \boldsymbol{\alpha}(j) | | \beta(p) |\right.\right.\notag\\
&\times\left.\left. | \boldsymbol{\alpha}(p) | | \beta(q) | | \boldsymbol{\alpha}(q) |
e^{j\left(\boldsymbol{\theta}'(i) - \boldsymbol{\theta}'(j) + \boldsymbol{\theta}'(p) - \boldsymbol{\theta}'(q)\right)}\right|^4 \right] \nonumber \\
&+ \mathbb{E}[|h_\text{I}|^4].
\label{eq:EY_prime4}
\end{align}
Applying $\mathbb{E}[e^{j\left(\boldsymbol{\theta}'(i) - \boldsymbol{\theta}'(j) + \boldsymbol{\theta}'(p) - \boldsymbol{\theta}'(q)\right)}] = 1$ if $i = j$ and $p = q$ (or $i = q$ and $p = j$), and $\mathbb{E}[e^{j(\boldsymbol{\theta}'(i) - \boldsymbol{\theta}'(j) + \boldsymbol{\theta}'(p) - \boldsymbol{\theta}'(q))}] = 0$ otherwise, we obtain (\ref{eq:EY_prime2}). This completes the proof.
\end{proof}

\begin{lemma}[Approximate Outage Probability]\label{lm2}
The approximate OP for RIS-assisted D2D communication, in the presence of interference at both the destination and the RIS, can be derived in a closed-form expression as
\begin{align}
P_{\text{out}}^{\text{approx}}&= 1 -\sum_{i=0}^{k_X - 1} \frac{1}{i!}\frac{\Gamma(2k_Y + i)}{2^{(k_Y + \frac{i}{2}) - 1} \Gamma(k_Y)} \left(\frac{\gamma_{\text{th}} \theta_Y}{\bar{\gamma} \theta_X^2}\right)^{\frac{i}{2}} \notag \\
& \times \exp\left(\frac{\gamma_{\text{th}} \theta_Y}{8\bar{\gamma} \theta_X^2}\right) D_{-(2k_Y + i)}\left(\sqrt{\frac{\gamma_{\text{th}} \theta_Y}{2\bar{\gamma} \theta_X^2}}\right).
\label{eq:pout_approx}
\end{align}
\end{lemma}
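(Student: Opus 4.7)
The plan is to substitute the gamma approximations of Theorems~\ref{thm3} and \ref{thm4} into the outage integral~(\ref{eq:pout}), then reduce the resulting double integral to a finite sum of parabolic cylinder evaluations using a known Gradshteyn--Ryzhik identity. Starting from $P_{\text{out}} = \Pr(X < \sqrt{\gamma_{\text{th}}/\bar\gamma}\,Y)$ and conditioning on $Y' = Y^2$, I would write
\begin{equation*}
P_{\text{out}}^{\text{approx}} = \int_{0}^{\infty} f_{Y'}(y)\, F_X\!\left(\sqrt{y\gamma_{\text{th}}/\bar\gamma}\right) dy,
\end{equation*}
where $f_{Y'}$ is the gamma PDF from Theorem~\ref{thm4}. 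Assuming (as is standard in the gamma-approximation literature) that $k_X$ is rounded to a positive integer, the CDF of the Erlang-type $X$ admits the finite Poisson-sum form $F_X(x) = 1 - e^{-x/\theta_X}\sum_{i=0}^{k_X-1}(x/\theta_X)^i/i!$, which I would plug in to split $P_{\text{out}}^{\text{approx}}$ into $1$ minus a sum of $k_X$ integrals indexed by $i$.

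Next I would perform the change of variable $u = \sqrt{y}$, $dy = 2u\,du$, which turns each integrand into $u^{2k_Y+i-1}\exp(-u^2/\theta_Y - au)$ with $a \triangleq \sqrt{\gamma_{\text{th}}/\bar\gamma}/\theta_X$. This is exactly the form
\begin{equation*}
\int_{0}^{\infty} u^{\nu-1} e^{-\beta u^2 - \gamma u}\, du = (2\beta)^{-\nu/2}\Gamma(\nu)\exp\!\left(\frac{\gamma^2}{8\beta}\right) D_{-\nu}\!\left(\frac{\gamma}{\sqrt{2\beta}}\right)
\end{equation*}
from \cite[eq.~(3.462.1)]{ref25}, evaluated at $\nu = 2k_Y + i$, $\beta = 1/\theta_Y$, $\gamma = a$. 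Substituting back and collecting the $\theta_Y$, $\theta_X$ and $\gamma_{\text{th}}/\bar\gamma$ prefactors into the combination $\gamma_{\text{th}}\theta_Y/(\bar\gamma\theta_X^2)$ should reproduce exactly the expression~(\ref{eq:pout_approx}), including the $2^{k_Y + i/2 - 1}$ denominator and the $D_{-(2k_Y+i)}(\cdot)$ argument $\sqrt{\gamma_{\text{th}}\theta_Y/(2\bar\gamma\theta_X^2)}$.

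The main obstacle I expect is not the algebra itself but the integer-shape assumption on $k_X$: the Erlang tail expansion used above requires $k_X \in \mathbb{Z}_{>0}$, whereas Theorem~\ref{thm3} yields $k_X = N\pi^2/(16-\pi^2)$, which is generally non-integer. I would handle this by rounding $k_X$ (and, if desired, $k_Y$) to the nearest positive integer, arguing that this rounding introduces negligible additional error on top of the gamma approximation already invoked, since the gamma CDF is continuous in its shape parameter and the fractional correction is absorbed into the overall approximation budget. A secondary technical check is to verify that the conditions $\Re(\beta) > 0$ and $\Re(\nu) > 0$ for the G--R identity are satisfied, which is immediate from $\theta_Y > 0$ and $2k_Y + i \geq 1$. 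With these two points addressed, the remaining derivation is a straightforward bookkeeping of constants.
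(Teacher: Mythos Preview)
Your proposal is correct and follows essentially the same route as the paper's Appendix~\ref{appB}: both insert the gamma approximations of Theorems~\ref{thm3}--\ref{thm4} into~(\ref{eq:pout}), reduce the inner integral to the Erlang/Poisson tail expansion of the gamma CDF (the paper arrives at this via \cite[eq.~(3.381.8)]{ref25} followed by~(\ref{eq:gamma_cdf_series}), you write $F_X$ directly), and then evaluate the remaining $y$-integral with \cite[eq.~(3.462.1)]{ref25} after the substitution $u=\sqrt{y}$. Your explicit discussion of the integer-$k_X$ requirement is a point the paper uses silently in~(\ref{eq:gamma_cdf_series}).
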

\begin{proof}
Detailed in Appendix~\ref{appB}.
\end{proof}

It should be noted that (\ref{eq:pout_approx}) expresses the closed-form representation of OP as a sum of well-defined parabolic cylinder functions that can be directly evaluated in several software packages, such as MAPLE, MATHEMATICA, MATLAB, etc.

\subsection{Asymptotic Analysis}
To comprehend the impact of key system parameters on OP, we investigate the system's asymptotic performance in a high average SIR regime, enabling us to derive the diversity order and the coding gain.
\begin{lemma}[Asymptotic Outage Probability]\label{lm3}
The asymptotic expression of the OP for high average SIRs is given by
\begin{equation}
P_{\text{out}}^{\text{asymp}} = \frac{\Gamma(k_Y + \frac{k_X}{2})}{\Gamma(k_X + 1) \Gamma(k_Y)} \left( \frac{\bar{\gamma}_\text{S} \theta_Y}{\bar{\gamma}_\text{I} \gamma_{\text{th}} \theta_X^2} \right)^{-\frac{k_X}{2}}.
\label{eq:pout_asymp}
\end{equation}
\end{lemma}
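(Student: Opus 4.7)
The plan is to combine the gamma approximations of Theorems~\ref{thm3} and~\ref{thm4} with a small-argument expansion of the CDF of $X$, since in the high-SIR regime $\bar{\gamma}\to\infty$ the threshold $\sqrt{\gamma_{\text{th}}/\bar{\gamma}}$ shrinks to zero and only the left tail of $X$ determines the dominant behavior. Starting from~(\ref{eq:pout}), I would change the variable of integration to $Y' \triangleq Y^{2}$ so that the interference statistics enter directly through the gamma density of Theorem~\ref{thm4}, giving
\begin{equation*}
P_{\text{out}} = \int_{0}^{\infty} F_X\!\left(\sqrt{\tfrac{\gamma_{\text{th}}}{\bar{\gamma}}\,y'}\right) f_{Y'}(y')\,dy',
\end{equation*}
so that only the CDF of the desired sum $X$ has to be handled asymptotically.

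Next, using the Taylor series $\gamma(k_X,u) = u^{k_X}/k_X + O(u^{k_X+1})$ for the lower incomplete gamma function, I would approximate, for small $x$,
\begin{equation*}
F_X(x) = \frac{\gamma(k_X,\,x/\theta_X)}{\Gamma(k_X)} \approx \frac{x^{k_X}}{\Gamma(k_X+1)\,\theta_X^{k_X}}.
\end{equation*}
Inserting this leading monomial collapses $P_{\text{out}}$ into a single fractional moment of $Y'$:
\begin{equation*}
P_{\text{out}}^{\text{asymp}} \approx \frac{1}{\Gamma(k_X+1)\,\theta_X^{k_X}}\left(\frac{\gamma_{\text{th}}}{\bar{\gamma}}\right)^{\!k_X/2}\!\mathbb{E}\!\left[(Y')^{k_X/2}\right].
\end{equation*}
Because $Y'$ is gamma with parameters $(k_Y,\theta_Y)$, this fractional moment is closed-form, $\mathbb{E}[(Y')^{k_X/2}] = \theta_Y^{k_X/2}\,\Gamma(k_Y+k_X/2)/\Gamma(k_Y)$, which follows from the standard change of variable in the gamma density.

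The last step is purely algebraic: substitute $\bar{\gamma} = \bar{\gamma}_{\text{S}}/\bar{\gamma}_{\text{I}}$, collect the $\theta_X$, $\theta_Y$, $\gamma_{\text{th}}$, and SIR factors, and rewrite the resulting positive power as a negative power of its reciprocal to match the form in~(\ref{eq:pout_asymp}). From this expression the diversity order $k_X/2$ and the coding-gain prefactor can be read off directly.

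The only genuine obstacle is justifying that truncating $F_X$ at its leading monomial captures the true asymptotic behavior, i.e., that the discarded $O(x^{k_X+1})$ correction contributes at most an $\bar{\gamma}^{-(k_X+1)/2}$ subleading term after integration against $f_{Y'}$. This is a straightforward dominated-convergence check: since $Y'$ is gamma, every moment $\mathbb{E}[(Y')^{(k_X+j)/2}]$ is finite, and the exponential factor $e^{-x/\theta_X}$ in $F_X$ stays within $[e^{-\varepsilon},1]$ on the shrinking support, so it does not affect the leading order. Once this tail bound is in place, the remainder of the argument is bookkeeping with the standard gamma-moment formula.
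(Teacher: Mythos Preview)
Your argument is correct and reaches~(\ref{eq:pout_asymp}) without gaps. The route, however, is genuinely different from the paper's. The paper starts from the closed-form expression~(\ref{eq:pout_approx}) of Lemma~\ref{lm2}, applies the small-argument expansion of the parabolic cylinder function $D_{-\nu}(x)$ in terms of confluent hypergeometric functions, uses $\Phi(a,b;0)=1$, retains the dominant term of the resulting finite sum, and then simplifies via the Legendre duplication formula. Your approach instead bypasses Lemma~\ref{lm2} entirely: you return to the integral definition~(\ref{eq:pout}), expand the gamma CDF $F_X(x)\sim x^{k_X}/(\Gamma(k_X+1)\theta_X^{k_X})$ at the origin, and reduce everything to the fractional moment $\mathbb{E}[(Y')^{k_X/2}]$ of a gamma variable. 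This is more elementary—no parabolic cylinder or hypergeometric identities are needed—and it has the added benefit of not relying on $k_X$ being an integer, which the finite-sum representation~(\ref{eq:pout_approx}) implicitly assumes via the series~(\ref{eq:gamma_cdf_series}). The paper's route, on the other hand, demonstrates internal consistency between its closed-form approximation and the asymptotic law. Both are valid; yours is the cleaner derivation.
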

\begin{proof}
Detailed in Appendix~\ref{appC}.
\end{proof}

As it can be seen in (\ref{eq:pout_asymp}), an increase in the average INR, i.e., $\bar{\gamma}_\text{I}$, leads to higher OP, consequently reducing the system performance. Conversely, an enhancement in the average SNR, i.e., $\bar{\gamma}_\text{S}$, improves the system performance. Proposition~\ref{pro1} provides insights into the diversity order of the considered RIS-assisted D2D system.

\begin{proposition}\label{pro1}
The diversity order and coding gains of RIS-assisted D2D communications in the presence of interference are
\begin{equation}
\mathcal{G}_d = \frac{N}{4} \frac{\pi^2}{(16 - \pi^2)},
\label{eq:diversity_order}
\end{equation}
and
\begin{equation}
\mathcal{G}_c = \left( \frac{\Gamma(k_Y + \frac{k_X}{2})}{\Gamma(k_X + 1) \Gamma(k_Y)} \right)^{-\frac{2}{k_X}} \frac{\theta_Y}{\theta_X^2},
\label{eq:coding_gain}
\end{equation}
respectively.
\end{proposition}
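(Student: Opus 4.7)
The plan is to read off $\mathcal{G}_d$ and $\mathcal{G}_c$ by matching the asymptotic expression of Lemma~\ref{lm3} to the canonical high-SIR form $P_{\text{out}} \approx (\mathcal{G}_c \bar{\gamma})^{-\mathcal{G}_d}$. I would first rewrite the result of Lemma~\ref{lm3} as $P_{\text{out}}^{\text{asymp}} = A\,\bar{\gamma}^{-k_X/2}$, where $\bar{\gamma} = \bar{\gamma}_\text{S}/\bar{\gamma}_\text{I}$ and $A$ collects the Gamma-function ratio together with the constants involving $\theta_X$, $\theta_Y$, and $\gamma_{\text{th}}$. The exponent of $\bar{\gamma}$ then directly identifies the diversity order, and $\mathcal{G}_c = A^{-1/\mathcal{G}_d}$ follows by a routine algebraic rearrangement.

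To convert $\mathcal{G}_d$ into the explicit form claimed, I would substitute the shape parameter from Theorem~\ref{thm3}. Computing $k_{X_n} = \mathbb{E}[X_n]^2/\text{Var}[X_n] = (\pi/4)^2/(1-\pi^2/16) = \pi^2/(16-\pi^2)$ and using the well-known additivity of gamma shape parameters under i.i.d.\ summation, namely $k_X = N k_{X_n}$, immediately yields the desired $N\pi^2/(16-\pi^2)$ scaling in $\mathcal{G}_d$, with the constant prefactor determined by the canonical form adopted. For the coding gain, raising $A$ to the $-2/k_X$ power gives the $\Gamma$-function ratio raised to $-2/k_X$ multiplied by $\theta_Y/\theta_X^2$, after absorbing the $\gamma_{\text{th}}$ dependence into the definition of $\mathcal{G}_c$.

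The step I expect to matter most is not mathematical depth but convention-tracking: one must be careful about how $\bar{\gamma}_\text{S}$, $\bar{\gamma}_\text{I}$, and $\bar{\gamma}$ are absorbed into the reference SIR inside $(\mathcal{G}_c\bar{\gamma})^{-\mathcal{G}_d}$, since a factor-of-two ambiguity in whether one measures the slope with respect to $\bar{\gamma}$ or $\sqrt{\bar{\gamma}}$ shifts the numerical constant attached to $N$ in $\mathcal{G}_d$. Once this convention is fixed consistently with the authors' definition of $\bar{\gamma}$, the derivation reduces to the algebraic matching described above and no further technical machinery is required beyond what is already proved in Theorem~\ref{thm3} and Lemma~\ref{lm3}.
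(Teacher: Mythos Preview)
Your proposal is correct and follows essentially the same route as the paper: both arguments start from the asymptotic expression of Lemma~\ref{lm3}, read off $\mathcal{G}_d = k_X/2$ as the exponent of $\bar{\gamma}$ (the paper phrases this as the limit $\lim_{\bar{\gamma}\to\infty}(-\log P_{\text{out}}^{\text{asymp}}/\log\bar{\gamma})$, which amounts to the same thing), substitute $k_X = N\pi^2/(16-\pi^2)$ from Theorem~\ref{thm3}, and then extract $\mathcal{G}_c$ by matching to the canonical form $P_{\text{out}} \approx (\mathcal{G}_c\bar{\gamma})^{-\mathcal{G}_d}$. Your caution about the numerical prefactor on $N$ is well placed: the paper's own proof yields $\mathcal{G}_d = k_X/2 = \tfrac{N}{2}\cdot\tfrac{\pi^2}{16-\pi^2}$, which does not match the $\tfrac{N}{4}$ stated in~\eqref{eq:diversity_order}, so the discrepancy you anticipated is in fact present in the paper rather than in your reasoning.
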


\begin{proof}
The diversity order is defined by the negative slope of the OP plotted against the average SIR in a log-log scale \cite{ref37}. Using (\ref{eq:pout_asymp}), we calculate the diversity order as
\begin{equation}
\mathcal{G}_d = \lim_{\bar{\gamma} \to \infty} \left( -\frac{\log P_{\text{out}}^{\text{asymp}}}{\log \bar{\gamma}} \right) = \frac{k_X}{2},
\label{eq:diversity_order_calc}
\end{equation}
which according to (\ref{eq:pdf_x_approx}), $k_X$ is expressed as $k_X = \frac{N \pi^2}{16 - \pi^2}$, and the diversity order is derived as (\ref{eq:diversity_order}). In general, the OP in the high SNR regime can be asymptotically approximated as $P_{\text{out}} = (\mathcal{G}_c \bar{\gamma})^{-\mathcal{G}_d}$ \cite{ref38}. This formulation indicates that the coding gain, i.e., $\mathcal{G}_c$, can be derived using~(\ref{eq:pout_asymp}) as (\ref{eq:coding_gain}). This concludes the proof.
\end{proof}

According to (\ref{eq:diversity_order}), the system's diversity order solely depends on $N$ and shows a linear increase in relation to it.\footnote{This is identical to the diversity order reported in \cite[eq.~(43)]{ref39} for scenarios without interference.} Additionally, this order remains unaffected by the interference, but the presence of interference signals can degrade the system performance by reducing the coding gain.\footnote{This observation was also reported in previous literature that neglected the reflected interference by the RIS \cite{ref5}.}
\begin{figure}[!t]
\centering
\includegraphics[width=2.5in]{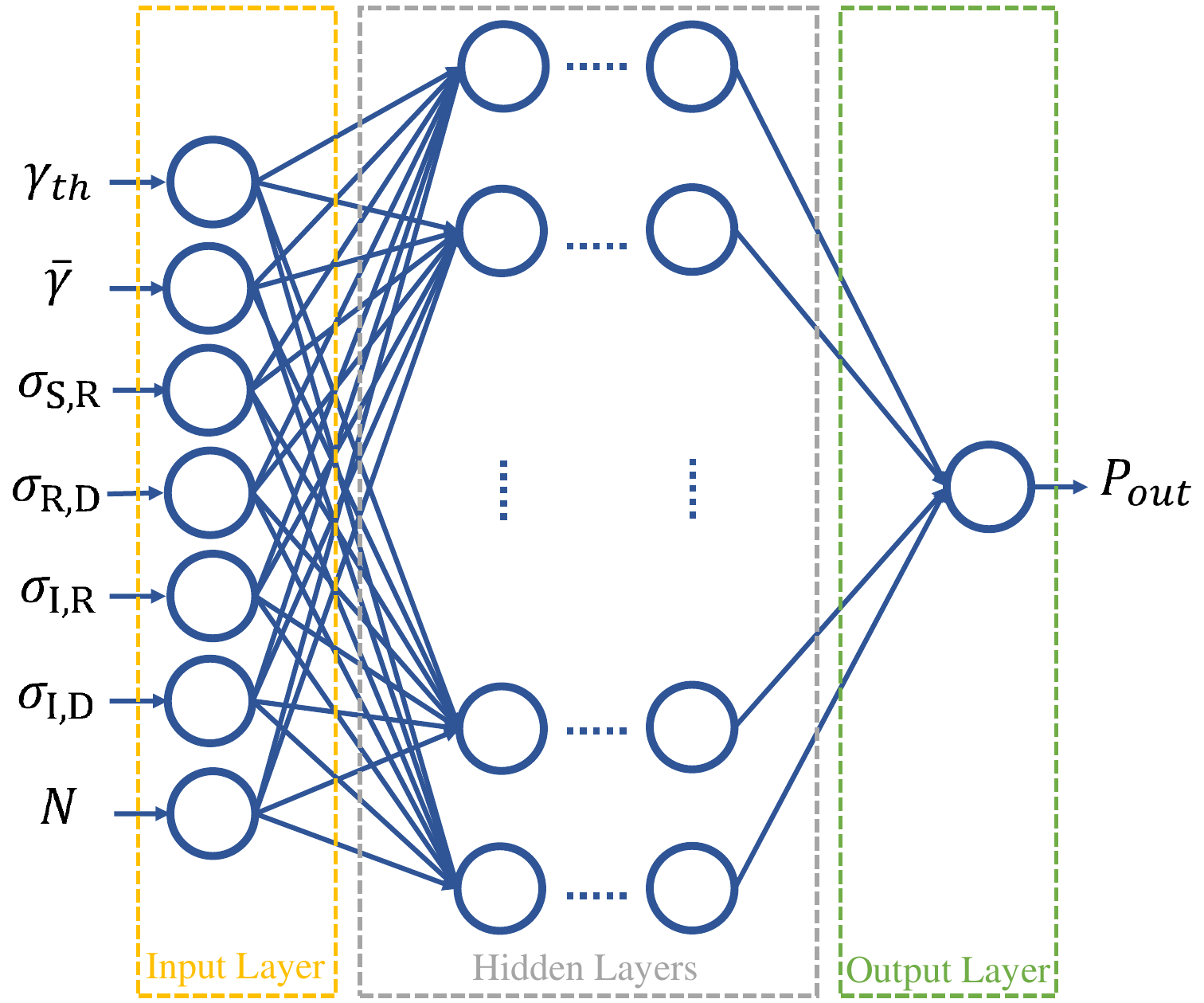}
\caption{The DNN structure for outage probability prediction.}
\label{fig_2}
\end{figure}

\section{Outage Probability Prediction} \label{sec:dnn_approach}
To reduce the complexity and enable real-time analysis of the OP performance, this section redefines the OP evaluation as a regression problem. Subsequently, we propose an OP prediction method based on a supervised DNN.

\subsection{Dataset Generation}
For the OP corresponding to (\ref{Pout_exact}), it is identified that parameters such as $\gamma_{\text{th}}$, $\bar{\gamma}$, $\sigma_\text{S,R}$, $\sigma_\text{R,D}$, $\sigma_\text{I,R}$, $\sigma_\text{I,D}$, and $N$ influence the OP performance. The primary objective is to enable ML approaches to learn the relationship between these input parameters, represented as $\textbf{input} = [\gamma_{\text{th}}, \bar{\gamma}, \sigma_\text{S,R}, \sigma_\text{R,D}, \sigma_\text{I,R}, \sigma_\text{I,D}, N]$, and the corresponding exact OP given by (\ref{Pout_exact}), represented as $\text{output}$. Specifically, the dataset, denoted as $\mathcal{D} = \{\mathbf{D}(1), \mathbf{D}(2), \ldots, \mathbf{D}(S)\}$, consists of pairs $\mathbf{D}(s) = \{\textbf{input}(s), \text{output}(s)\}$, $\forall s = 1, \ldots, S$. A total of $S = 10,000$ data points are generated for this dataset. In each independent run, these input parameters are determined as the $s$-th inputs, and the exact OP performance, i.e., $P_{\text{out}}^{\text{exact}}$, is calculated using (\ref{Pout_exact}), serving as the $s$-th training target. The dataset, i.e., $\mathcal{D}$, is then divided into three subsets: $70\%$ for training, $10\%$ for testing, and the remaining~$20\%$ reserved as a validation set to assess the generalization ability of the trained ML model.

\subsection{Network Structure}
Among various ML approaches suitable for our regression problem, some candidates have been selected, and then, by further consideration, finally DNNs have been adopted. The detailed consideration of neural network structures is not addressed here, as it is beyond the scope of this work. The DNN structure used in this study, as illustrated in Fig.~\ref{fig_2}, comprises an input layer, three fully connected hidden layers with 20, 30, and 20 neurons, respectively, and an output layer.\footnote{The configuration of the DNN was determined through trial-and-error.} The input passes through these hidden layers, and the predicted output, i.e., ${\hat{P}}_{out}$, is generated.

\begin{figure}[!t]
\centering
\includegraphics[width=3in]{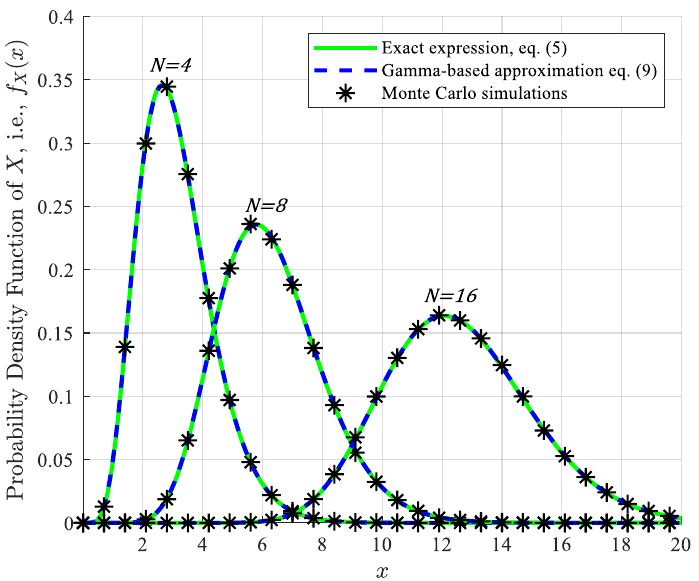}
\caption{Comparison of Monte Carlo simulation results with the exact expression from Theorem~\ref{thm1} and the gamma-based approximate expression from Theorem~\ref{thm3} for the PDFs of double-Rayleigh RVs, i.e., $X$.}
\label{fig_3}
\end{figure}

\subsection{Training loss function}
To minimize the difference between the predicted OP, $\hat{P}_{\text{out}}$, and the actual OP, $P_{\text{out}}^{\text{exact}}$, we employ the mean square error (MSE) loss function:
\begin{equation}
\text{MSE} = \frac{1}{S} \sum_{s=1}^{S} \left( \hat{P}_{\text{out}}(s) - P_{\text{out}}^{\text{exact}}(s) \right)^2.
\label{eq:mse}
\end{equation}

In each epoch, the optimizer iteratively optimizes the weight in each layer using the Levenberg-Marquardt method \cite{ref40}, based on the loss value.

It is important to note that the ML training stage requires a powerful computational server and is conducted offline, whereas the implementation stage can be executed online. Furthermore, the trained ML model does not require updates, even if there are changes in the users' positions. Retraining is only necessary when the considered interference model changes.

\section{Simulation Results} \label{sec:simulation_results}
This section presents numerical results obtained from Monte Carlo simulations, which serve to verify the accuracy of the analytical expressions derived for the PDF, OP, and the diversity order of RIS-assisted interference-limited systems, as discussed in previous sections. These simulations also explore the impact of interference on the system performance and demonstrate real-time analysis of the OP performance using the proposed DNN-based OP prediction. We conduct Monte Carlo simulations with 10,000 realizations for each data point. For simplicity, the threshold $\gamma_{\text{th}}$ is set to 0~dB, and all channels are assumed to have unity variance, i.e., $\sigma_\text{S,R} = \sigma_\text{R,D} = \sigma_\text{I,R} = \sigma_\text{I,D} = 1$.

\begin{figure}[!t]
\centering
\includegraphics[width=3in]{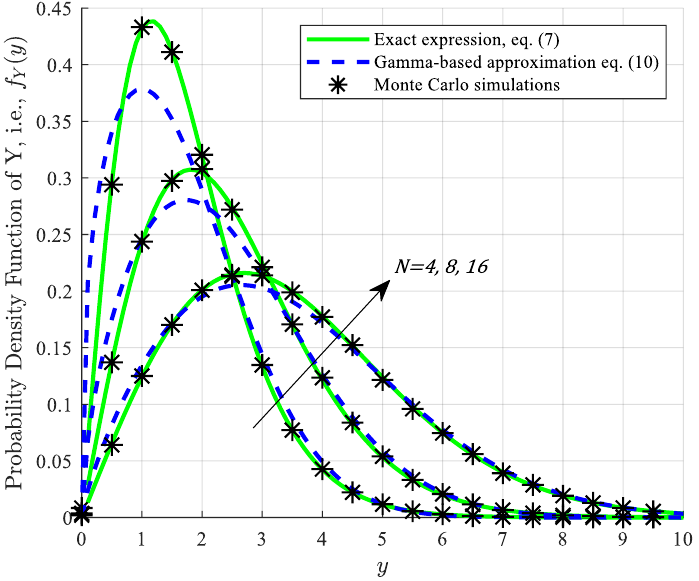}
\caption{Comparison of the exact and approximate PDFs of $Y$ from Theorem~\ref{thm2} and Theorem~\ref{thm4}, versus Monte Carlo simulation results for various values of $N$.}
\label{fig_4}
\end{figure}
\subsection{Probability Density Function}
Fig.~\ref{fig_3} illustrates a comparison between the exact expression from Theorem~\ref{thm1}, i.e., (\ref{eq:pdf_x}), the gamma-based approximate expression from Theorem~\ref{thm3}, i.e., (\ref{eq:pdf_x_approx}), and the Monte Carlo simulation results for the sum of double-Rayleigh RVs, i.e., $X$. The figure shows a perfect match between the Monte Carlo simulations and both the exact and asymptotic PDFs derived in (\ref{eq:pdf_x}) and (\ref{eq:pdf_x_approx}), confirming the accuracy of our derived expressions. Additionally, it is observed that an increase in the number of reflecting elements, i.e., $N$, leads to improved fading conditions.

Fig.~\ref{fig_4} validates the derived PDFs for $Y$ by comparing the exact and approximate expressions from Theorem~\ref{thm2}, i.e., (\ref{eq:pdf_y}), and Theorem~\ref{thm4}, i.e., (\ref{eq:pdf_y2_approx}), with Monte Carlo simulation results for $N \in \{4, 8, 16\}$. This figure indicates a precise match of the exact expression with the simulation results across all values of $N$, with good agreement between the exact and approximated PDFs over a broad range of $y$ values. The figure further reveals that the accuracy of the approximated PDFs largely depends on the number of reflecting elements, achieving an accurate approximation with the gamma-based approach when $N \geq 16$.

\subsection{Outage Probability }
Fig.~\ref{fig_5} presents the OP performance as a function of the average SNR, i.e., $\bar{\gamma}_\text{S}$, for various values of $N$ when the average INR is set at $\bar{\gamma}_\text{I} = \{0, 15\}$~dB. The numerical results, derived through Monte Carlo simulations, align well with our analytical findings, including the exact expression from Lemma~\ref{lm1} and the gamma-based approximate expression from Lemma~\ref{lm2}. The figure demonstrates that, in the high SNR region, the exact values, converge to the asymptotic results from Lemma~\ref{lm3}, i.e., (\ref{eq:pout_asymp}). Although by increasing $N$, the interference through the RIS is increased, the phase shift response of each element is optimized based on the desired signal. Consequently, as $N$ increases, the enhancement of the desired signal outweighs the increase in the interference, effectively mitigating the impacts of the interference. For example, at an average INR of $\bar{\gamma}_\text{I} = 0$, to achieve an OP of $10^{-5}$, $N = 4$ requires an SNR of 20~dB, while $N = 8$ requires only 10~dB. Moreover, Fig.~\ref{fig_5} validates the accuracy of the diversity analysis presented in Proposition~\ref{pro1}. The figure clearly demonstrates that, for a given value of $N$, the asymptotic curves exhibit identical slopes. This observation highlights that the diversity order $\mathcal{G}_d$ is solely determined by $N$, reinforcing the conclusion that the presence of the interference does not impact the system’s diversity order.

\begin{figure}[!t]
\centering
\includegraphics[width=2.9in]{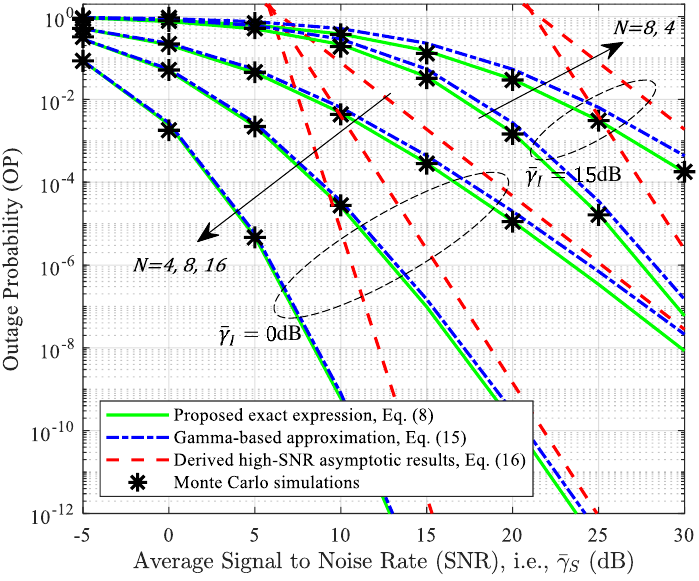}
\caption{Outage probability versus the average SNR, i.e., $\bar{\gamma}_\text{S}$, for various values of $N$ when the average INR $\bar{\gamma}_\text{I} = \{0, 15\}$~dB.}
\label{fig_5}
\end{figure}

To illustrate the adverse impacts of the interference on the RIS-assisted communication, Fig.~\ref{fig_6} plots the OP curves versus average INR, i.e., $\bar{\gamma}_\text{I}$, with $\bar{\gamma}_\text{S} = 15$~dB for $N \in \{4, 8, 16\}$. These results validate the expressions presented in (\ref{Pout_exact}) and (\ref{eq:pout_approx}). It is evident that an increase in $\bar{\gamma}_\text{I}$ leads to OP performance degradation. Moreover, Fig.~\ref{fig_6} indicates that increasing~$N$ significantly reduces the adverse effect of the interference on the OP performance. For instance, an INR of~5~dB results in an OP of $10^{-2}$ for $N = 4$, while in the case of $N = 8$ the OP improves to a significantly lower $10^{-9}$.
Finally, Fig.~\ref{fig_6} demonstrates that increasing interference, i.e., INR, significantly reduces OP performance. For instance, for $N=8$, an INR of 5~dB results in an OP of approximately $10^{-5}$, while at an INR of -5~dB, the OP improves significantly to approximately $10^{-10}$. This underscores the importance of considering interference in the performance evaluation of RIS-assisted systems.

\subsection{Performance Analysis of the DNN-based OP prediction}
This subsection presents simulation results to validate the efficacy and computational advantages of the proposed DNN-based OP prediction method. For a fair comparison, all measurements are conducted on computers with identical configurations.
\begin{figure}[!t]
\centering
\includegraphics[width=2.9in]{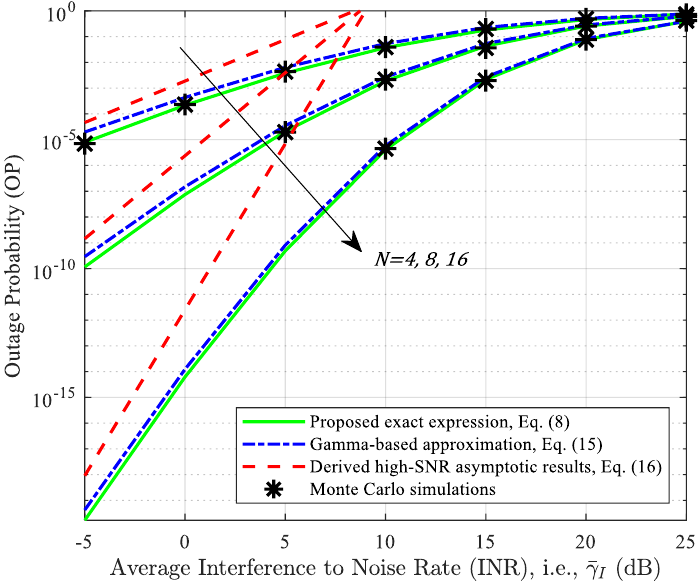}
\caption{Outage probability as a function of average INR, i.e., $\bar{\gamma}_\text{I}$, with a fixed SNR of $\bar{\gamma}_\text{S} = 15$~dB, for $N \in \{4, 8, 16\}$.}
\label{fig_6}
\end{figure}

\begin{table}[!b]
\centering
\caption{Exclusion time and MSE in the case of 1000 samples using analytical OP expressions, and DNN-based OP prediction approach.}
\label{tab:label1}
\begin{tabular}{cccc}
\toprule
\textbf{Approach} & \multicolumn{1}{p{1.5cm}}{\centering\textbf{Exact\\ expression}} & \multicolumn{1}{p{2cm}}{\centering\textbf{Gamma-based\\ expression}} & \multicolumn{1}{p{1.5cm}}{\centering\textbf{DNN-based\\ prediction}} \\
\midrule
MSE                & 0                         & $4.66 \times 10^{-5}$            & $9.37 \times 10^{-6}$           \\
Time (sec)         & 1,284.2                   & 308.336                         & 0.247 \\
\bottomrule
\end{tabular}
\end{table}

In Table~\ref{tab:label1}, we summarize the execution time and MSE values of both the DNN-based OP prediction approach and the analytical expressions derived for the OP, i.e., the exact integral-form expression in (\ref{Pout_exact}) and the gamma-based approximation expression in (\ref{eq:pout_approx}). 
From Table~\ref{tab:label1}, the DNN-based OP prediction approach outperforms the analytical expressions derived for the OP in terms of time efficiency, albeit with a slight loss in accuracy. 
The exact integral-form expression in (\ref{Pout_exact}) has a computational complexity of $\mathcal{O}\left(M N^2_\text{int} N^3_\text{MGfunc}\right)$, where $N_\text{int}$ denotes the number of subdivisions in the integration interval for numerical calculations, and $N_\text{MGfunc}$ represents the complexity associated with calculating the Meijer G-function \cite{Complexity1}. Similarly, the gamma-based approximation expression in (\ref{eq:pout_approx}) has a computational complexity of $\mathcal{O}\left(k_X N^3_\text{PCfunc}\right)$, where $N_\text{PCfunc}$ represents the complexity associated with calculating the Parabolic Cylinder function, depending on the method used such as integral representations, Maclaurin series, and confluent hypergeometric functions \cite{Complexity2}. Both of these complexities can be considered impractical for many applications due to their potentially high computational demands.
Thanks to the offline training procedure, the DNN-based OP prediction approach significantly reduces execution time compared to the analytical expressions, without a considerable increase in MSE. Specifically, the computational complexity decreases to $\mathcal{O} (1)$ for DNN, since the network fixed layer sizes and a fixed input size. This demonstrates a substantial computational advantage of employing machine learning for OP performance prediction in RIS-assisted D2D communication systems.

\begin{figure}[!t]
\centering
\includegraphics[width=3in]{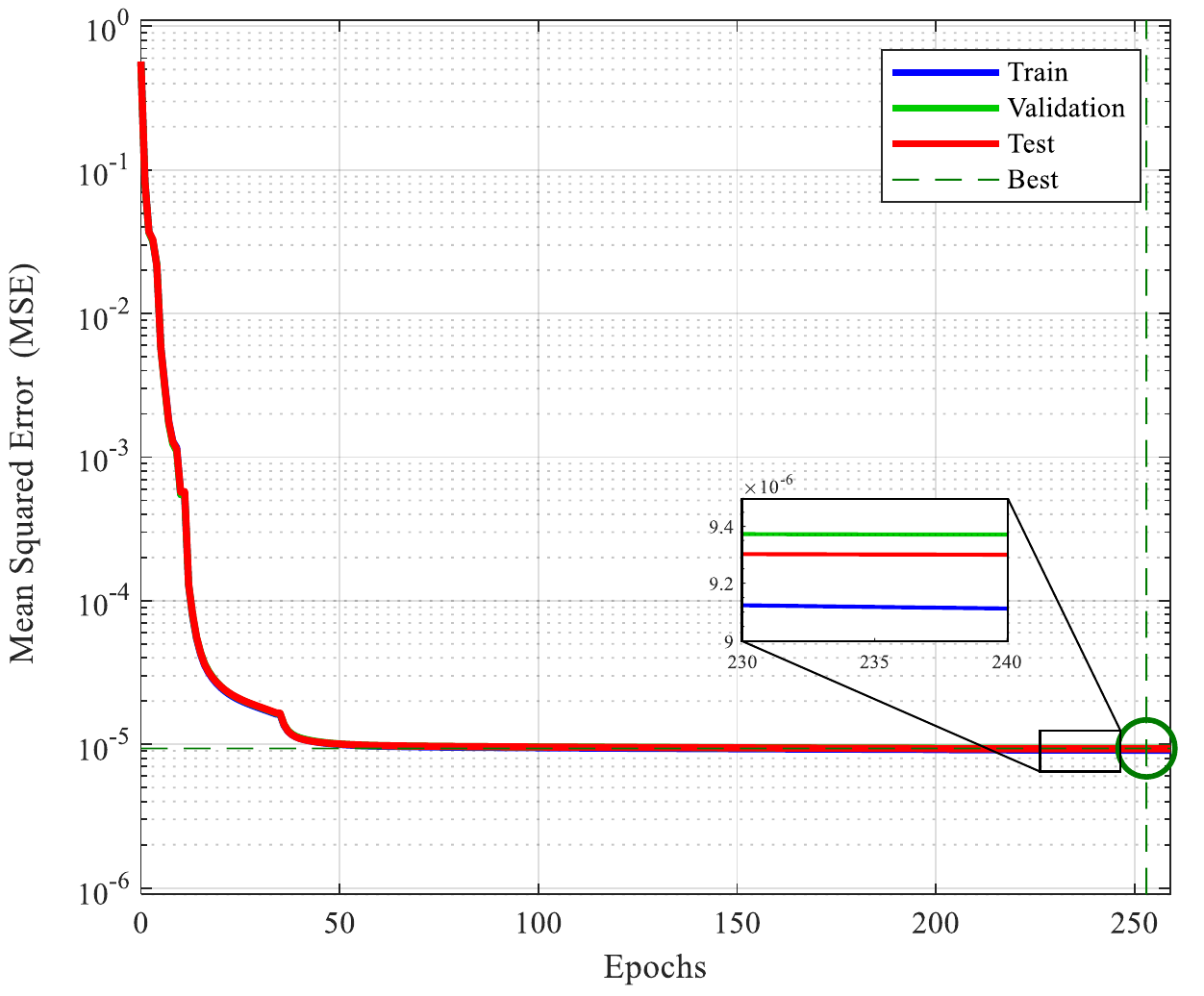}
\caption{Training process of the proposed DNN-based OP prediction method.}
\label{fig_7}
\end{figure}
Moreover, Fig.~\ref{fig_7} illustrates the training process of the DNN-based OP prediction model. As the number of epochs increases, the model demonstrates a high level of precision, evidenced by the lowest validation MSE of $9.37 \times 10^{-6}$ achieved at epoch 253. The cessation of training at epoch 259, prompted by a rise in MSE following its lowest point at epoch 253, reflects a well-implemented early stopping mechanism aimed at preventing overfitting. The training behavior depicted in Fig.~\ref{fig_7} suggests that the proposed DNN-based model has successfully learned the underlying patterns of the dataset and can accurately predict the system’s OP performance.

Finally, Fig.~\ref{fig_8} presents the results of regression analysis. The value of $R$, indicative of the relationship between the predicted and actual results, shows the predictive accuracy of the DNN-based approach. An $R$ value of 0.99996 signifies exceptional prediction performance of the DNN-based approach, underscoring its effectiveness in this application.

\section{Conclusion} \label{sec:conclusion}
In this paper, we have extensively analyzed the impact of co-channel interference on RIS-assisted D2D communication systems, offering a novel perspective by considering interference at both the user and the RIS. We introduced an integral-form expression for calculating the OP and a gamma-based approximation, offering a practical closed-form solution. Furthermore, we developed a DNN-based approach for real-time OP prediction, demonstrating its substantial computational efficiency and accuracy. Our investigation into the diversity order and coding gain in the presence of interference reveals that while the interference does not affect the diversity order of the system, it significantly degrades the performance by reducing the coding gain. Additionally, our results indicate that increasing the number of RIS elements effectively mitigates the adverse effects of the interference, leading to an overall enhancement in the system performance. In particular, the employment of ML in this study not only highlights the potential role of ML in wireless communications but also opens new avenues for employing ML-based computational techniques in real-time performance analysis of communication systems.
\begin{figure}[!t]
\centering
\includegraphics[width=3in]{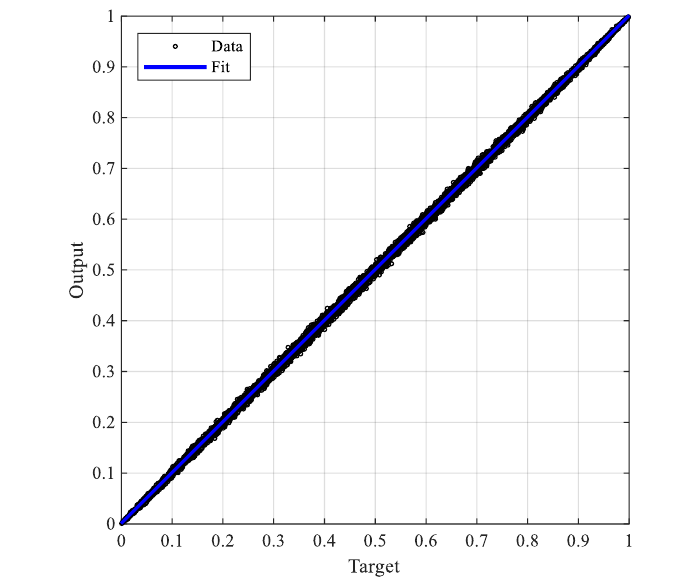}
\caption{Regression analysis results for the proposed DNN-based OP prediction.}
\label{fig_8}
\end{figure}

\section*{Acknowledgments}
This work was supported by the Ministry of Science and Higher Education of the Republic of Kazakhstan, under project No. AP13068587, and by Iran National Science Foundation (INSF) under project No. 4001804.

{\appendices
\section{Proof of Theorem~\ref{thm2}: Exact PDF of $Y$}\label{appA}
Let $Y = \left| \sum_{n=1}^{N+1} \boldsymbol{\mathcal{h}}(n) e^{j \boldsymbol{\phi}(n)} \right|$, where $\boldsymbol{\mathcal{h}}(n)$ for $n \in \mathscr{N}$ is defined as $\boldsymbol{\mathcal{h}}(n) = |\boldsymbol{\beta}(n)| |\boldsymbol{\alpha}(n)|$, which follows double-Rayleigh RVs. For $n=N+1$, $\boldsymbol{\mathcal{h}}(n)$ is defined as $\boldsymbol{\mathcal{h}}(N+1) = |h_\text{I}|$ and follows a Rayleigh RV. Moreover, $\boldsymbol{\phi}(n) = \boldsymbol{\theta}'(n)$ for $n \in \mathscr{N}$ and $\boldsymbol{\phi}(N+1) = \angle h_\text{I}$. According to \cite{ref41, ref_R1}, the PDF of $Y$ can be expressed as
\begin{equation}
f_Y(y) = y \int_{0}^{\infty} \rho \mathcal{I}_0(y\rho) \Lambda(\rho) \, d\rho,
\label{eq:pdf_Y}
\end{equation}
where $\Lambda(\rho)$ is given by
\begin{align}
\Lambda(\rho) &= \mathbb{E}_{\boldsymbol{\mathcal{h}}(1),\ldots,\boldsymbol{\mathcal{h}}(N+1)} \left[ \prod_{n=1}^{N+1} \mathcal{I}_0(\boldsymbol{\mathcal{h}}(n)\rho) \right] \notag\\
&= \prod_{n=1}^{N+1} \mathbb{E}_{\boldsymbol{\mathcal{h}}(n)} \left[ \mathcal{I}_0(\boldsymbol{\mathcal{h}}(n)\rho) \right].
\label{eq:lambda_rho}
\end{align}
The last equality results from the independency of $\boldsymbol{\mathcal{h}}(1),\ldots,\boldsymbol{\mathcal{h}}(N+1)$. The expected value in (\ref{eq:lambda_rho}) can be derived as
\begin{equation}
\mathbb{E}_{\boldsymbol{\mathcal{h}}(n)} \left[ \mathcal{I}_0(\boldsymbol{\mathcal{h}}(n)\rho) \right] = \int_{0}^{\infty} \mathcal{I}_0(h\rho) f_{\boldsymbol{\mathcal{h}}(n)}(h) \, dh.
\label{eq:expected_value_h}
\end{equation}
The PDF of $\boldsymbol{\mathcal{h}}(n)$, for $n \in \mathscr{N}$, which are the PDF of double-Rayleigh RVs \cite[eq.~(6)]{ref42}, is given by
\begin{equation}
f_{\boldsymbol{\mathcal{h}}(n)}(h) = \frac{4h}{\sigma_\text{I,R}^2 \sigma_\text{R,D}^2} \mathcal{K}_0\left(\frac{2h}{\sigma_\text{I,R} \sigma_\text{R,D}}\right),\ n \in \mathscr{N},
\label{eq:pdf_h_n}
\end{equation}
Transforming the second order Bessel functions into Meijer’s G-functions \cite[03.04.26.0037.01]{ref43}, can be expressed as
\begin{equation}
f_{\boldsymbol{\mathcal{h}}(n)}(h) = \frac{2h}{\sigma_\text{I,R}^2 \sigma_\text{R,D}^2} G_{0,2}^{2,0}\left(\frac{h}{\sigma_\text{I,R} \sigma_\text{R,D}},\frac{1}{2} \bigg| 0,0\right),\ n \in \mathscr{N}.
\label{eq:pdf_h_n_meijer}
\end{equation}
Also, by transforming the first-order Bessel functions into Meijer’s G-functions \cite[03.01.26.0056.01]{ref43}, we have
\begin{equation}
\mathcal{I}_0(h\rho) = G_{0,2}^{1,0}\left(\frac{h\rho}{2},\frac{1}{2} \bigg| 0,0\right).
\label{eq:I0_meijer}
\end{equation}
By substituting (\ref{eq:pdf_h_n_meijer}) and (\ref{eq:I0_meijer}) into (\ref{eq:expected_value_h}) and utilizing \cite{ref44} and \cite[07.23.17.0046.01]{ref43}, equation (\ref{eq:expected_value_h}) for $n \in \mathscr{N}$ can be written as,
\begin{equation}
\mathbb{E}_{\boldsymbol{\mathcal{h}}(n)}\left[\mathcal{I}_0(\boldsymbol{\mathcal{h}}(n)\rho)\right] = \frac{4}{\rho^2 \sigma_\text{I,R}^2 \sigma_\text{R,D}^2 + 4},\ n \in \mathscr{N}.
\label{eq:expected_value_h_meijer}
\end{equation}

The PDF of $\boldsymbol{\mathcal{h}}(N+1)$, which is the Rayleigh RVs, can be expressed as
\begin{equation}
f_{\boldsymbol{\mathcal{h}}(N+1)}(h) = \frac{2h}{{\sigma_\text{I,D}}^2} \exp{\left(-\frac{h^2}{{\sigma_\text{I,D}}^2}\right)}.
\label{eq:pdf_h_N_plus_1}
\end{equation}
By transforming the Bessel functions into series, \(\mathcal{I}_0(h\rho) = \sum_{m=0}^{\infty} \frac{1}{m! \Gamma(m+1)} \left(\frac{h\rho}{2}\right)^{2m}\) \cite[03.02.02.0001.01]{ref43}, equation (23) for $n = N+1$, can be expressed as
\begin{align}
\mathbb{E}_{\boldsymbol{\mathcal{h}}(n)}&\left[\mathcal{I}_0(\boldsymbol{\mathcal{h}}(n)\rho)\right]= \sum_{m=0}^{\infty} \frac{\left(\frac{\rho}{2}\right)^{2m}}{m! \Gamma(m+1)}\notag\\
&\times\int_{0}^{\infty} \frac{2h^{2m+1}}{{\sigma_\text{I,D}}^2} \exp\left(-\frac{h^2}{{\sigma_\text{I,D}}^2}\right) \, dh.
\label{eq:expected_value_h_series}
\end{align}
Utilizing \cite[eq.~(3.461.3)]{ref25}, the integral of (23) can be evaluated as \(\int_{0}^{\infty} \frac{2h^{2m+1}}{{\sigma_\text{I,D}}^2} \exp\left(-\frac{h^2}{2{\sigma_\text{I,D}}^2}\right) dh = \sigma_\text{I,D}^{2m} \Gamma(m+1)\), and then, (\ref{eq:expected_value_h_series}) reduces to
\begin{equation}
\mathbb{E}_{\boldsymbol{\mathcal{h}}(N+1)}\left[\mathcal{I}_0(\boldsymbol{\mathcal{h}}(N+1)\rho)\right] = \sum_{m=0}^{\infty} \frac{\left(\frac{\sigma_\text{I,D}\rho}{2}\right)^{2m}}{\Gamma(m+1)}.
\label{eq:expected_value_h_N_plus_1}
\end{equation}
By substituting (\ref{eq:expected_value_h_meijer}) and (\ref{eq:expected_value_h_N_plus_1}) into (\ref{eq:lambda_rho}), we have
\begin{equation}
\Lambda(\rho) = \left(\frac{4}{4 + \sigma_\text{I,R}^2 \sigma_\text{R,D}^2 \rho^2}\right)^N \sum_{m=0}^{\infty} \frac{\left(\frac{\sigma_\text{I,D}\rho}{2}\right)^{2m}}{\Gamma(m+1)},
\label{eq:lambda_rho_series}
\end{equation}
and then by substituting (\ref{eq:lambda_rho_series}) into (\ref{eq:pdf_Y}), the PDF of $Y$ can be expressed as
\begin{equation}
f_Y(y) = y \sum_{m=0}^{\infty} \frac{{\sigma_\text{I,D}}^{2m} \sigma_\text{I,R}^{2N} \sigma_\text{R,D}^{2N}}{\Gamma(m+1) 2^{2m - 2N}} \int_{0}^{\infty} \frac{\rho^{2m+1} \mathcal{I}_0(y\rho)}{\left(\rho^2 + \frac{4}{\sigma_\text{I,R}^2 \sigma_\text{R,D}^2}\right)^N} \, d\rho.
\label{eq:pdf_Y_series}
\end{equation}
The proof is completed by utilizing \cite[eq.~(6.565.8)]{ref25}, followed by some algebraic manipulations to obtain the PDF of $Y$ in closed-form as in (\ref{eq:pdf_y}).

\section{Proof of Lemma~\ref{lm2}: Approximate Outage Probability}\label{appB}
The derivation of Lemma 2 involves incorporating the approximate PDFs of $X^2$ and $Y^2$, as derived from (\ref{eq:pdf_x_approx}) and (\ref{eq:pdf_y2_approx}), into (\ref{eq:pout}), which yields
\begin{align}
P_{\text{out}}^{\text{approx}} &= \int_{y=0}^{\infty} \frac{1}{\Gamma(k_Y) \theta_Y^{k_Y}} y^{k_Y - 1} e^{-\frac{y}{\theta_Y}}\notag\\
&\times\int_{x=0}^{y \frac{\bar{\gamma}_\text{I} \gamma_{\text{th}}}{\bar{\gamma}_\text{S}}} \frac{1}{2 \Gamma(k_X) \theta_X^{k_X}} x^{\frac{k_X}{2} - 1} e^{-\frac{\sqrt{x}}{\theta_X}} \, dx \, dy.
\label{eq:pout_approx_integral}
\end{align}
Applying the integral results from \cite[eq.~(3.381.8)]{ref25}, (\ref{eq:pout_approx_integral}) can be rewritten as
\begin{align}
P_{\text{out}}^{\text{approx}} &= \int_{y=0}^{\infty} \frac{1}{\Gamma(k_Y) \theta_Y^{k_Y}} y^{k_Y - 1} e^{-\frac{y}{\theta_Y}}\notag\\
&\times\frac{1}{\Gamma(k_X)} \gamma\left(k_X, \frac{\left(y \frac{\bar{\gamma}_\text{I} \gamma_{\text{th}}}{\bar{\gamma}_\text{S}}\right)^{\frac{1}{2}}}{\theta_X}\right) \, dy.
\label{eq:pout_approx_rewritten}
\end{align}
Considering a series expansion of the cumulative distribution function (CDF) of the gamma distribution, we obtain \cite{ref36}
\begin{equation}
\frac{\gamma(\alpha, \beta)}{\Gamma(\alpha)} = 1 - \sum_{i=0}^{\alpha - 1} \frac{\beta^i}{i!} e^{-\beta}.
\label{eq:gamma_cdf_series}
\end{equation}
Utilizing the integral results from \cite[eq.~(3.462.1)]{ref25}:
\begin{align}
\int_{0}^{\infty} y^{\frac{\nu}{2} - 1} &e^{-\beta y - \gamma y^{\frac{1}{2}}} \, dx \notag\\
&= 2 (2\beta)^{-\frac{\nu}{2}} \Gamma(\nu) \exp\left(\frac{\gamma^2}{8\beta}\right) D_{-\nu}\left(\frac{\gamma}{\sqrt{2\beta}}\right),
\label{eq:integral_result}
\end{align}
we can deduce the approximate OP as presented in (\ref{eq:pout_approx}). Thus, the proof is complete.

\section{Proof of Lemma~\ref{lm3}: Asymptotic Outage Probability}\label{appC}
By considering a high average SIR, and applying an asymptotic expression of $D_a(x)$ at $x \rightarrow 0$ \cite[eq.~(07.41.06.0010.01)]{ref43}, (\ref{eq:pout_approx}) can be rewritten as
\begin{align}
&P_{\text{out}}^{\text{approx}} = 1 - \sum_{i=0}^{k_X - 1} \frac{\sqrt{\pi}}{i!} \frac{\Gamma(2k_Y + i)}{2^{2k_Y + i - 1} \Gamma(k_Y)} \left( \frac{\gamma_{\text{th}} \theta_Y}{\bar{\gamma} \theta_X^2} \right)^{\frac{i}{2}} \notag\\
&\left( \frac{1}{\Gamma\left(k_Y + \frac{i + 1}{2}\right)} \Phi\left(-\left(k_Y + \frac{i}{2}\right), \frac{1}{2}; \frac{\gamma_{\text{th}} \theta_Y}{4 \bar{\gamma} \theta_X^2}\right)\right. \notag\\
&\left.- \frac{\sqrt{\frac{\gamma_{\text{th}} \theta_Y}{\bar{\gamma} \theta_X^2}}}{\Gamma\left(k_Y + \frac{i}{2}\right)} \Phi\left(-\left(k_Y + \frac{i + 1}{2}\right), \frac{3}{2}; \frac{\gamma_{\text{th}} \theta_Y}{4 \bar{\gamma} \theta_X^2}\right) \right).
\label{eq:pout_approx_asymptotic}
\end{align}

An approximation of $\Phi(a, b; x)$, as $x \rightarrow 0$, is given by \cite[eq.~ (07.20.03.0001.01)]{ref43},
\begin{equation}
\lim_{x \rightarrow 0} \Phi(a, b; x) = 1.
\label{eq:Phi_approx}
\end{equation}

By substituting (\ref{eq:Phi_approx}) into (\ref{eq:pout_approx_asymptotic}) and ignoring the series sequence with smaller terms due to their relatively small value, (\ref{eq:pout_approx_asymptotic}) can be rewritten as
\begin{equation}
P_{\text{out}}^{\text{approx}} = \frac{\sqrt{\pi}}{\Gamma(k_X + 1)} \frac{\Gamma(2k_Y + k_X - 1)}{2^{2k_Y + k_X - 2} \Gamma(k_Y)} \left( \frac{\left( \frac{\gamma_{\text{th}} \theta_Y}{\bar{\gamma} \theta_X^2} \right)^{\frac{k_X}{2}}}{\Gamma\left(k_Y + \frac{k_X - 1}{2}\right)} \right).
\label{eq:pout_approx_final}
\end{equation}

Then, by utilizing $\frac{\sqrt{\pi} \Gamma(n)}{2^{n - 1} \Gamma(n/2)} = \Gamma\left(\frac{n + 1}{2}\right)$, we derive (\ref{eq:pout_asymp}), thereby completing the proof.
}

\end{document}